\documentclass[12pt]{article}
\usepackage{amsmath, amsthm, amsfonts, amssymb}
\usepackage[margin=1in]{geometry}
\usepackage{setspace}
\usepackage{natbib}

\newtheorem{theorem}{Theorem}
\newtheorem{definition}{Definition}
\newtheorem{assumption}{Assumption}

\newtheorem{corollary}{Corollary}
\newtheorem{proposition}{Proposition}
\newtheorem{remark}{Remark}

\newcommand{\E}{\mathbb{E}}

\newcommand{\F}{\mathcal{F}}

\newcommand{\Z}{\mathcal{Z}}

\author{Eduardo Fonseca Mendes\thanks{The author acknowledges financial support from the Fundação de Amparo à Pesquisa do Estado de São Paulo (FAPESP), grant number 2020/07019-2.}\\{\small Getulio Vargas Foundation}}
\date{}
\title{Consistent Order Selection under Non-Identifiability and Dependence}
\begin{document}
\maketitle
\begin{abstract}
We provide sufficient conditions for the consistency of penalized least squares procedures that select the order (dimension) of a regression model from a sequence of nested classes, allowing for dependent, martingale-difference errors. The main contribution is to relax the classical identifiability requirement: parameters indexing classes larger than the true order need not be identified, provided the additional, excess directions admit a linear approximation to the truth in a neighbourhood of the true parameter. This relaxation lets the number of candidate models grow with the sample size, removing the usual need for a fixed upper bound. We verify the resulting high-level conditions for two classes of nonlinear regression models used in applied work: multiple-regime smooth transition regression and mixture-of-experts models with generic generalized linear model experts. Under a BIC-type penalty, the resulting order selection rule is consistent in either model class whenever the number of candidate models grows slower than the logarithm of the sample size.
\end{abstract}

\noindent Key words and phrases: Model order selection; Non-identifiability; Penalized least squares; Martingale differences; Mixture-of-experts models; Smooth transition regression models.

\section{Introduction}
Choosing the number of regimes or components in a nonlinear regression model is a recurring problem in nonlinear time series analysis and beyond. Smooth transition and threshold autoregressions, finite mixtures, and mixtures of experts all nest a well-identified low-order model inside a strictly larger family that is no longer identified once the true order $m_0$ is exceeded. Extra regimes can be relabelled, collapsed onto an existing regime, or have their transition-function slope driven to zero without changing the regression function, so the parameter indexing them is not uniquely determined by the data. In this case, the classical asymptotic theory behind AIC \citep{akaike1974}, BIC \citep{schwarz1978}, and related penalized criteria fails to apply for $m>m_0$.

Statistical literature has addressed loss of identifiability under the null in the fixed comparison of $m_0$ against $m_0+1$, largely for testing the number of components in a finite mixture or the number of regimes in a Markov-switching model \citep{hansen1992,chowhite2007}. \citet{dacunha-castellegassiat1999} derive the asymptotic distribution of the likelihood ratio statistic for mixture models by reparametrizing the non-identified alternative to be locally conic. \citet{liushao2003} derive the same statistic's asymptotic distribution for a broader model class, using a generalized differentiable-in-quadratic-mean reparametrization. \citet{rynkiewicz2016} derives the analogous asymptotic distribution for the sum of squared errors of a multilayer perceptron regression model with redundant hidden units, using a set of generalized derivative functions that forms a Donsker class.  However, all these results are confined to a single test, and cannot be directly used to claim consistent selection over a whole nested family of candidate orders.

A first line of work establishes consistency of a penalized criterion over the whole family, but only under a known, fixed upper bound on the candidate order. \citet{keribin2000} and \citet{gassiat2002} prove BIC-type consistency for the number of components in i.i.d. mixture models and hidden Markov models, respectively. \citet{olteanurynkiewicz2008} extend it to mixtures of multilayer perceptrons, using the bracketing entropy of a class of generalized score functions. As \citet{csiszarshields2000} stress for the analogous Markov-order problem, such a fixed a priori bound is unsatisfactory on theoretical grounds, and unreasonable in practice. It forces the applied researcher to pre-commit to a maximal order before running the very procedure meant to determine that order.

A second line of work removes this restriction. \citet{csiszarshields2000} do so for Markov chain order estimation, where the model remains identified at every order. \citet{gassiatvanhandel2013}, on the other hand, works on the non-identified case of i.i.d. location mixtures, obtaining consistency of a penalized likelihood order estimator with no prior bound on the order and identifying the minimal admissible penalty rate, of order $\log\log n$. Their result is the closest to ours. But a time series regression has errors that form a martingale difference sequence, since the filtration can include lagged values of $Y_t$ or serially correlated regressors, and models such as smooth transition regressions specify only a conditional mean, fit by least squares, so there is no conditional density for a likelihood-based criterion to use.

This note establishes consistency of a penalized least squares order-selection criterion for a general nested family of possibly non-identified nonlinear regression classes. We allow for (i) the regression error to form a martingale difference sequence with respect to a filtration that may include lagged values of the regressors and the dependent variable, and (ii) the number of candidate classes $\bar m_n$ to diverge with the sample size. We assume the regression function admits a first-order linear representation in the identifiable direction, in the spirit of \citet{liushao2003} and \citet{rynkiewicz2016}. The proof uses a Bernstein-type exponential inequality for martingale differences that accommodates the dependence in (i), and a bracketing-entropy bound \citep{m-estimation} controls a union bound over the entire, diverging family of alternatives in (ii). This exponential control is what lets $\bar m_n$ grow with $n$ under dependence and a least squares criterion.

The note is structured as follows. Section~\ref{main} states the assumptions and the main consistency result (Theorem~\ref{thm:modsel}). Section~\ref{example} verifies the linear representation condition for a mixture-of-experts class and for the multiple-regime smooth transition regression model. of \citet{luukkonenetal1988}. Section~\ref{s:proof} contains the proof.

\section{Main Result}\label{main}
Consider the nonlinear regression model
\[
Y_i = f_0(X_i) +U_i
\]
where $Y_i$ is a scalar; $X_i\in\mathcal{X}$ is a vector of explanatory variables, $U_i$ a martingale difference sequence with respect to $\Z_{i-1}:=\sigma\langle U_1,\dots,U_{i-1},X_1,\dots,X_i\rangle$, and $f_0:\mathcal{X}\rightarrow\Re$ is a bounded measurable function of $X_i$. This accommodates several forms of dependence: $X_i$ may include lags of $Y_i$, lags of $U_i$, exogenous variables and their lags, or deterministic regressors, and the error term itself may be autocorrelated.
Th covariate space $\mathcal{X}$ is assumed to be compact with $\bar x:=\sup_{x\in\mathcal{X}}|x|<\infty$. This is a convenience, the discussion after Theorem~\ref{thm:modsel} relaxes it to a moment condition on $X_i$.

The nonlinear function $f_0(\cdot) = f_{m_0}(\cdot;\theta^0)$ is a member of some parametric family $\F_{m_0}$ indexed by $\theta_{m_0}\in\Theta_{m_0}$, but $m_0$ is not known a priori. Since $m_0$ is not known a priori, we consider estimating $f_0$ in a larger class denoted $\mathcal{F}_\infty$, which is the limit of the nested collection $\F_1\subset\F_2\subset\cdots$ of bounded, parametric families indexed by $\theta_m\in\Theta_m$. The index $m_0$ is the smallest index such that 
\[
	\|Y-f_0(X)\|^2 = \inf_{f\in\F_{m_0}} \|Y-f(X)\|^2 \le \inf_{f\in \cup_{m=1}^\infty\F_{m}} \|Y-f(X)\|^2 = \inf_{f\in\F_\infty}\|Y-f(X)\|^2,
\]
where $\|h(Y,X)\|^2 = n^{-1}\sum_{i=1}^n\E[h(Y_i,X_i)^2]$, and the minimum is attained at $f_0$. 

A two step algorithm is employed to estimate $f_0$. For fixed $n$, choose an upper bound $\bar{m}_n$ on the dimension of the model to be estimated. Given a set of observations $\{(y_i,x_i):i=1,\dots,n\}$, solve the nonlinear least squares problem
\[
	\widehat f_m = \arg\min_{f\in\F_m}n^{-1}\sum_{i=1}^n(y_i-f(x_i))^2 =\arg\min_{f\in\F_m}\|y-f(x)\|_n^2, \quad 1\le m <\bar{m}_n,
\]
which is usually carried out by finding a parameter $\widehat\theta_m = \arg\min_{\theta\in\Theta_m}\|y-f(x;\theta)\|_n^2$. Given $\{\widehat f_1,\dots,\widehat f_{\bar m_n}\}$, the set of nonlinear least squares estimators of $f_0$ in $\F_1,\dots,\F_{\bar{m}_n}$, we estimate $f_0$ by $\hat{f}_{\widehat{m}}$, where the index $\widehat{m}$ is chosen to minimize the penalized nonlinear least squares estimator
\[
	\widehat{m} = \arg\min_{m=1,\dots,\bar{m}_n} \|y-\hat{f}_m(x)\|_n^2 + n^{-1}a_nk_m,
\]
where $a_n$ is an increasing sequence of $n$ and $k_m = k(\F_m)$ is an increasing sequence indexed by $\F_m$. The parameter $k_m$ is usually associated with the complexity of the class, such as the number of parameters or other related quantities in a parametric identifiable model. 

For many nested families of nonlinear regression models, identifiability cannot be ensured for $m>m_0$ The minimiser $\theta_m^*$ of $\|Y-f(X;\theta)\|^2$ need not be unique, so there can exist $\theta^{(1)}\ne\theta^{(2)}$ in $\Theta_m$ with $f_m(x;\theta^{(1)}) = f_m(x;\theta^{(2)})$ for almost every $x\in\mathcal{X}$. This work adopts a milder identifiability assumption.
\begin{assumption}[Identifiable Parametrization]\label{asn:ident}
	The collection of families $\{\F_1,\F_2,\dots\}$ satisfies:
	\begin{enumerate}
		\item the family of functions $\F_m$ is identifiable by its parameter vector $\theta\in\Theta_m$ for all $m\le m_0$;
		\item For each $m\ge m_0+1$, the family of functions $\F_m$ admits a parametrization $(\psi,\eta)$ such that (1) $\psi\in \Psi_m$ is identifiable; (2) $\eta\in H_m$ is not necessarily identifiable; and (3) there exists $\psi_0\in\Psi_m$ such taht for any $\eta\in H_m$, $f(\cdot; \psi_0,\eta) = f_0(\cdot)$.
	\end{enumerate}
\end{assumption}

This construction is often used in hypothesis testing when the nuisance parameter is not identifiable only under the alternative \citep{davies1977,davies1987,andrewsploberger1994,hansen1996,andrews2001}, and in the specific case of mixture and time series models, by \citet{dacunha-castellegassiat1999}. A second requirement is that $f_m(x;\psi,\eta)$ can be locally approximated by a line.
\begin{assumption}[Linear Representation]\label{asn:linapprox}
	Fix $m\ge m_0+1$ and $\eta\in H_m$ arbitrary. There exists a vector of bounded, linearly independent functions $l_\eta(\cdot) = (l_{\eta,1}(\cdot),\dots,l_{\eta,p_m}(\cdot))$, indexed by $(\psi_0,\eta)$, a mapping $\psi\mapsto h(\psi)\in\Re^{p_m}$ with $h(\psi_0)=0$ and $\sup_m\sup_{\psi\in\Psi_m}|h(\psi)|=:\bar c<\infty$, and a constant $M<\infty$ not depending on $m$, $\eta$ or $n$, such that
\begin{equation}
	\sup_{x\in\mathcal{X}}\big|f(x;\psi,\eta)- f(x;\psi_0,\eta) - h(\psi)'l_\eta(x)\big| \le M\,|h(\psi)|^2 \qquad \text{for every }\psi\in\Psi_m.
	\label{eq:firstorder}
\end{equation}

\end{assumption}
This assumption holds, in particular, whenever $\psi\mapsto f(x;\psi,\eta)$ is twice continuously differentiable on $\Psi_m$ with $\sup_{x,\eta,\psi}|\nabla^2_\psi f(x;\psi,\eta)|\le 2M$, by Taylor's theorem with integral remainder. If $M=0$ one recovers an exact linear representation.

Based on the work of \citet{dacunha-castellegassiat1999} in mixture models, the linearity test of \cite{luukkonenetal1988}, and others, we know that some nested non-identifiable models can be re-parametrized in such a way that for each $x\in\mathcal{X}$, $f_m(x)-f_0(x)$ can be written as a linear combination of `basis functions' $l_j(x)$ plus an approximation error. 

\begin{definition}\label{def:linrep}
	A family $\mathcal{G}:=\{g: \sup_{x\in\mathcal{X}}|g(x)|<\infty\}$ is said to have a \textit{linear representation} if there exists a vector of $p$ linearly independent, bounded functions $l(x) = (l_1(x),\dots,l_p(x))$ (\textit{basis functions}) such that for every $g\in\mathcal{G}$ and each $x\in\mathcal{X}$,
    \[
    	g(x) = \sum_{i=1}^p h_il_i(x) + e_g = h'l(x) + e_g,
    \]
    for some $h\in\Re^p\cap\{h'h<c(p)^2\}$, where $c(p)$ is some positive constant that may depend on the number of functions $p$, and $e_g = o(|h|)$ is some approximation bias.
\end{definition}
Under this definition, $\mathcal{G}$ is at most as complex as $\{h'l(\cdot):h\in\Re^p,\,h'h\le c(p)^2\}$. Note that $h\in\Re^P$ may change depending on $x$, which means that $g(x_1) = h_1'l(x_1)$ does not imply that $g(x_2) = h_1'l(x_2)$ for any two distinct arguments $x_1$ and $x_2$.

\begin{remark}\label{rem:bridge}
	Assumptions~\ref{asn:ident} and \ref{asn:linapprox} verify the linear representation of Definition~\ref{def:linrep} directly for the class $\mathcal{G}_m=\{f(\cdot;\psi,\eta)-f_0:\psi\in\Psi_m,\,\eta\in H_m\}$: by Assumption~\ref{asn:ident}(2), $f(\cdot;\psi_0,\eta)=f_0(\cdot)$, so Assumption~\ref{asn:linapprox}'s bound is exactly Definition~\ref{def:linrep}'s bound, with basis $l_\eta$, $h=h(\psi)$ ranging over $\{h\in\Re^{p_m}:|h|\le\bar c\}$, $c(p_m)=\bar c$, and bias $e_g$ satisfying $|e_g(x)|\le M|h|^2$ uniformly in $x\in\mathcal{X}$. 

\end{remark}

We show that under regularity conditions, if $(f-f_0)(x)$ admits a linear representation, the penalized least square procedure estimates $f_0$ with probability converging to one as $n\rightarrow\infty$. If $f$ is not bounded, the result holds in a set of arbitrarily high probability. The proof of this theorem is given in section \ref{s:proof}.
\begin{theorem}
	Define the following events:
	\begin{align*}
		\mathbb{F}_1&:=\left\{ \forall m\ge 1,\,\max_{i\le n}\sup_{f\in\F_m}|f(X_i)|\le  c_1 b_n\right\} \\
		\mathbb{F}_2&:=\left\{ \forall m<m_0,\,\inf_{f\in\F_m}n^{-1}\sum_{i=1}^n\E[(f-f_0)^2(X_i)|\Z_{i-1}]\ge c_0 \right\}\\
		\mathbb{F}_3&:=\left\{ \max_{i\le n}\E[\exp(\gamma|U_i|)|X_i,\Z_{i-1}]\le c_2 \right\},
	\end{align*}
	where each $c_i$ ($i=1,2,3$) is an arbitrary positive constant, $\Z_i = \sigma<U_1,\dots,U_i,X_1,\dots,X_i>$, and $\gamma>0$. Denote $\mathbb{F} := \mathbb{F}_1 \cap \mathbb{F}_2 \cap \mathbb{F}_3$, and choose $a_n$, $k_m$ and $\bar m_n$ satisfying
	\begin{enumerate}
		\item $0<k_m<k_{m+1}$ for all $m$, and, for every $m>m_0$, $c_3 |m-m_0|^\alpha\le |k_m-k_{m_0}|\le c_4 p_m^{2/(\beta+1)}$ for some positive constants $c_3$, $c_4$, $\alpha>0$ and $\beta>0$, where $p_m$ is the number of basis functions in the linear representation of $\mathcal{G}_m$, below; 
		\item $0<a_n<a_{n+1}$, $a_n/b_n^2 \rightarrow\infty$, and $a_n/n\rightarrow 0$ as $n\rightarrow\infty$; and
		\item $\bar m_n / a_n^{2\alpha/(\beta+1)} \rightarrow 0$ as $n\rightarrow\infty$.
	\end{enumerate}
	If for each $m>m_0$ the class of functions $\mathcal{G}_m = \{f-f_0:f\in\F_m\}\cap\mathbb{F}_1$ admits a linear representation with $p_m$ basis functions, then
	\[
	\Pr(\widehat m \ne m_0\cap \mathbb{F}) \rightarrow 0\mbox{ as }n\rightarrow\infty.
	\]
	\label{thm:modsel}
\end{theorem}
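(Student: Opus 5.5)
The proof splits $\{\widehat m\ne m_0\}$ into underfitting, $\{\widehat m<m_0\}$, and overfitting, $\{\widehat m>m_0\}$, and bounds each on $\mathbb{F}$ by a union bound over candidate orders. The basic identity, for any $f$, is
\[
\|y-f(x)\|_n^2-\|y-f_0(x)\|_n^2=\|f-f_0\|_n^2-2\nu_n(f-f_0),\qquad \nu_n(g):=n^{-1}\sum_{i=1}^n U_i g(X_i).
\]
Since $U_i$ is a martingale difference with respect to $\Z_{i-1}$ and $X_i$ is $\Z_{i-1}$-measurable, $\nu_n(g)$ is a sum of martingale differences. On $\mathbb{F}_1\cap\mathbb{F}_3$ its increments have conditional exponential moments: $|g(X_i)|\le 2c_1b_n$, and the moment generating function of $U_i$ is bounded near the origin. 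This gives a Bernstein-type exponential inequality of the form
\[
\Pr\bigl(|\nu_n(g)|>t,\ V_n(g)\le v\bigr)\le 2\exp\{-nt^2/(C(v+b_nt))\},
\]
where $V_n(g)=n^{-1}\sum_i \E[g^2(X_i)|\Z_{i-1}]$. I would take this from Freedman or de la Peña, stopping the martingale on the event $\mathbb{F}$.

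\emph{Underfitting.} For $m<m_0$, comparing with the $f_0$-fit shows that $\widehat m=m$ requires
\[
\inf_{f\in\F_m}\bigl\{\|f-f_0\|_n^2-2\nu_n(f-f_0)\bigr\}\le n^{-1}a_n(k_{m_0}-k_m)\to 0,
\]
using $a_n/n\to0$. On $\mathbb{F}_2$ the conditional quadratic variation $V_n(f-f_0)$ is at least $c_0$. A martingale ratio/LLN argument shows that $\|f-f_0\|_n^2-V_n(f-f_0)$ and $\nu_n(f-f_0)$ are uniformly $o_p(1)$ over $\F_m$. This uniformity comes from bracketing entropy of the bounded parametric class $\F_m$ together with the Bernstein inequality, chained as in van de Geer. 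The left side is therefore at least $c_0/2$ with probability tending to one. Since $m$ ranges over the finite set $\{1,\dots,m_0-1\}$, this step is routine.

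\emph{Overfitting.} For $m>m_0$, $\widehat m=m$ requires $\|y-\widehat f_m\|_n^2\le\|y-f_0\|_n^2-n^{-1}a_n(k_m-k_{m_0})$. Write $g=\widehat f_m-f_0\in\mathcal{G}_m$, which on $\mathbb{F}_1$ has the linear representation $g=h'l_\eta+e_g$ with $|h|\le\bar c$ and $|e_g|\le M|h|^2$. The requirement becomes
\[
2\nu_n(g)-\|g\|_n^2\ge n^{-1}a_nc_3(m-m_0)^\alpha .
\]
The key bound is
\[
\sup_{g\in\mathcal{G}_m}\bigl\{2\nu_n(g)-\|g\|_n^2\bigr\}\lesssim \sup_{h,\eta}\bigl\{2h'\nu_n(l_\eta)-c\,h'\widehat\Sigma_\eta h\bigr\}+\text{remainder}\le \sup_\eta \nu_n(l_\eta)'\widehat\Sigma_\eta^{-1}\nu_n(l_\eta)/c+\text{remainder},
\]
which is of order $p_m\cdot(\log\text{-entropy})/n$. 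Here $\widehat\Sigma_\eta$ is the empirical Gram matrix of $l_\eta$, and linear independence keeps its eigenvalues bounded below. The remainder terms $\nu_n(e_g)$ and cross terms are absorbed using $|e_g|\le M|h|^2$ and $\|g\|_n\gtrsim|h|$ for small $h$; for large $|h|$ they are absorbed by the negative quadratic term.

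Quantitatively, apply Bernstein with $t\asymp\sqrt{x/n}$ to each coordinate of $\nu_n(l_\eta)$, uniformly over $\eta$ via the bracketing entropy of $\{h'l_\eta\}$, which is $\lesssim p_m\log(1/\epsilon)$. This gives
\[
\Pr\bigl(\sup_{\mathcal{G}_m}\{2\nu_n(g)-\|g\|_n^2\}\ge n^{-1}a_n(k_m-k_{m_0})\bigr)\le C\exp\{-c\,a_n(k_m-k_{m_0})/b_n^2+Cp_m\}.
\]
The condition $|k_m-k_{m_0}|\le c_4p_m^{2/(\beta+1)}$ is what I would use to trade the entropy exponent $p_m$ against the penalty, and it is the step I find least transparent. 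A peeling argument over shells $\{2^{j}\le n|h|^2\,\text{scale}<2^{j+1}\}$ would produce the $(\beta+1)$ power that the statement encodes. Using $a_n/b_n^2\to\infty$ and $k_m-k_{m_0}\ge c_3(m-m_0)^\alpha$, each term is then at most $\exp\{-c'(a_n/b_n^2)^{2/(\beta+1)}(m-m_0)^{\alpha}\}$ or similar. The union bound over $m_0<m\le\bar m_n$ is controlled because condition 3, $\bar m_n/a_n^{2\alpha/(\beta+1)}\to0$, keeps the number of terms small relative to the exponential decay.

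\emph{Main obstacle.} The hardest step is the uniform control over $\eta$ in the overfitting bound. Because $\eta$ is unidentified, the basis $l_\eta$ moves with $\eta$. I need a lower bound on the smallest eigenvalue of $\widehat\Sigma_\eta$ uniform in $\eta$ and $m$, or else must work directly with the normalized class $\{g/\|g\|\}$ as in Rynkiewicz. The entropy of that class must then be matched to the penalty growth. I would first try to bound the entropy of $\{h'l_\eta:|h|\le\bar c,\eta\in H_m\}$ by $p_m\log(b_n/\epsilon)$ and check that the peeling yields the exponents in conditions 1 and 3.
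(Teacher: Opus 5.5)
Your architecture matches the paper's: basic inequality, split into $\widehat m<m_0$ and $\widehat m>m_0$, martingale Bernstein on $\mathbb{F}$, entropy plus peeling, and a union bound over $m$. Your underfitting step is fine, given an entropy bound for the finitely many classes $\F_m$, $m<m_0$. The gap is in the overfitting step, at the point you yourself call least transparent.

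Your tail bound $C\exp\{-c\,a_n(k_m-k_{m_0})/b_n^2+Cp_m\}$ is informative only when $Cp_m<c\,a_n(k_m-k_{m_0})/b_n^2$. Summing it over $m_0<m\le\bar m_n$ therefore requires an \emph{upper} bound on $p_m$. The inequality you want to trade against, $|k_m-k_{m_0}|\le c_4p_m^{2/(\beta+1)}$, says $p_m\ge(|k_m-k_{m_0}|/c_4)^{(\beta+1)/2}$, which is a \emph{lower} bound. It can only enlarge the $+Cp_m$ term, no peeling reverses its direction, and nothing else in conditions 1--3 bounds $p_m$ from above.

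The hypotheses genuinely lack this control. Take $f_0=0$, $m_0=1$, i.i.d.\ Gaussian $U_i$ independent of $X_i\sim\mathrm{Unif}[0,1]$, and $\F_m=\{h'l^{(p_m)}:\sum_j|h_j|\le\bar c\}$ with $l^{(p)}$ the first $p$ trigonometric functions. Set $p_m=\lceil e^m\rceil$, $k_m=m$, $a_n=\frac12\log n$ and $\bar m_n=\lfloor(\log n)^{1/2}\rfloor$. Every condition holds with $\alpha=\beta=1$, constant $b_n$ and $\Pr(\mathbb{F})=1$. Yet $\|y-\widehat f_{m_0}\|_n^2-\|y-\widehat f_{\bar m_n}\|_n^2$ is of order $p_{\bar m_n}/n\approx e^{\sqrt{\log n}}/n$, far above the penalty increment $\approx\frac12(\log n)^{3/2}/n$. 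So $\widehat m\ne m_0$ with probability tending to one.

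Your route therefore needs an extra assumption such as $p_m\lesssim k_m-k_{m_0}$. This holds in both examples, where both sides are proportional to $m-m_0$. In the paper, condition 1 enters with a helpful sign only because Proposition~\ref{pro:convrate} puts $p_m\log p_m$ into the numerator of the exponent. It does so via the choice $C_0^2=(k_m-k_{m_0})/(16p_m\log p_m)$ in the uniform inequality (van de Geer's Theorem~8.13). There $C$ is a universal constant and $C_0^2\ge C^2(C_1+1)$ is required, so that choice is admissible only when $k_m-k_{m_0}\ge16C^2p_m\log p_m$. That is once more the missing upper bound on $p_m$. It also reappears as that proposition's side condition, which at $\varepsilon_n=a_n(k_m-k_{m_0})/n$ reads $p_m\log p_m\lesssim\sqrt{a_n}\,(k_m-k_{m_0})$.

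The second open step is uniformity in $\eta$, and the fix you sketch does not hold. Linear independence of $l_\eta$ for each $\eta$ gives no lower bound on the eigenvalues of $\widehat\Sigma_\eta$ uniformly in $\eta$. In both examples $l_\eta$ degenerates when two excess components share a nuisance value, e.g.\ $(\lambda_j,\tau_j)=(\lambda_k,\tau_k)$ in the STR model; this is why the paper claims non-collinearity only for generic $\eta$.

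For fixed $\eta$ you need no eigenvalue bound at all. When $M=0$, $\nu_n(l_\eta)$ lies in the range of $\widehat\Sigma_\eta$, so $\sup_{h\in\Re^{p_m}}\{2h'\nu_n(l_\eta)-c\,h'\widehat\Sigma_\eta h\}=(cn)^{-1}U'P_\eta U$. Here $P_\eta$ projects onto the column span of the $n\times p_m$ matrix with rows $l_\eta(X_i)'$. What remains is $\sup_\eta U'P_\eta U$, which needs the entropy of the $\eta$-indexed family of subspaces. Your count $p_m\log(1/\epsilon)$ omits it. The paper avoids Gram matrices altogether by peeling on $H_n(f)$ and bracketing over a grid in $h$, but it does not count the $\eta$-directions either, so there is nothing to borrow there.

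Finally, for $M>0$ the bound $|e_g|\le M|h|^2$ is small relative to $h'l_\eta$ only when $|h|\ll 1/M$. For $|h|$ of order $\bar c$ you therefore cannot absorb the remainder into $-\|g\|_n^2$ without a separate argument placing $\widehat f_m$ in that region.
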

The most critical assumption behind Theorem~\ref{thm:modsel} is that the family $\mathcal{G}_m$ admits a linear representation: this relates the non-identified model to an identifiable space, bounds the probability of a neighbourhood of $f_0$, and gives a parametric convergence rate for each fixed $m$ (with $b_n=1$), without implying that the likelihood ratio statistic is asymptotically chi-squared.

The requirement of an exponential moment for $U_i$ and boundedness for $f_m$ is standard in empirical process theory and nonparametric regression. Exponential probability bounds control the complexity of the functional classes $\F_m$, and these bounds need a uniform bound on $f_m$. The exponential-moment condition on $U_i$ is hard to relax without giving up generality, but the boundedness of $f_m$ can often be restated as a moment condition on $X_i$ instead, with $\Pr(\mathbb{F}_1^c)$ made arbitrarily small.

When $f_m$ is bounded and $U_i$ is independent of $\Z_{i-1}$ and $X_i$, conditions $\mathbb{F}_1$ and $\mathbb{F}_3$ hold automatically and the only binding restriction left is $\mathbb{F}_2$. It is a mild requirement on the informativeness of the estimation sample, and is satisfied, for instance, whenever $\inf_{f\in\F_{m_0-1}}(f-f_0)^2(x) > 0$ for almost every $x$. Outside this case, an unbounded $f_m$ or $U_i\mid\Z_{i-1}$ without a bounded exponential moment, $\Pr(\mathbb{F}^c)$ and the constants in the proof can grow large instead.

\begin{corollary}[BIC penalty]
	If $\Pr(\mathbb{F}^c)\rightarrow0$ for $b_n^2=(\log n)^{1-\delta}$, the BIC penalty ($a_n=\frac12\log n$, $k_m=|\Theta_m|$) yields a consistent model selection criterion, otherwise, it is valid on a set of probability approaching one.
\end{corollary}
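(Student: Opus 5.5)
The corollary follows from Theorem~\ref{thm:modsel} once its three rate conditions are checked for $a_n=\frac12\log n$, $k_m=|\Theta_m|$ and $b_n^2=(\log n)^{1-\delta}$ with some $\delta\in(0,1)$. The consistency statement then comes from splitting on $\mathbb{F}$:
\[
\Pr(\widehat m\ne m_0)\le \Pr(\{\widehat m\ne m_0\}\cap\mathbb{F})+\Pr(\mathbb{F}^c).
\]
The theorem sends the first term to zero. When $\Pr(\mathbb{F}^c)\to0$ we obtain full consistency. Otherwise we only get $\Pr(\widehat m\ne m_0\mid\mathbb{F})\to0$ whenever $\Pr(\mathbb{F})$ stays bounded away from zero, which is the ``valid on a set of probability approaching one'' clause.

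Condition 2 is direct. The sequence $a_n=\frac12\log n$ is increasing and $a_n/n\to0$. Also
\[
a_n/b_n^2=\tfrac12(\log n)^{\delta}\to\infty,
\]
which is exactly why the corollary takes $b_n^2=(\log n)^{1-\delta}$ rather than $b_n^2=\log n$.

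Condition 1 requires an argument about $k_m=|\Theta_m|$.
\begin{itemize}
\item \emph{Monotonicity.} Strict nesting $\F_m\subset\F_{m+1}$ of parametric families with added components gives $k_m<k_{m+1}$. The dimension grows by at least one per step, so we may take $\alpha=1$ and $c_3=1$.
\item \emph{Upper bound.} We need $|k_m-k_{m_0}|\le c_4p_m^{2/(\beta+1)}$. Take $\beta=1$, so the exponent is $1$, and it suffices to show $p_m\gtrsim k_m-k_{m_0}$.
\item \emph{Role of Remark~\ref{rem:bridge}.} Under Assumption~\ref{asn:linapprox}, $h(\psi)$ lives in $\Re^{p_m}$ and is built from the identifiable directions. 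In the examples of Section~\ref{example}, $p_m$ is at least the dimension of the excess parameters, which grows linearly in $m-m_0$.
\end{itemize}
This comparison of $p_m$ with parameter counts is the step I expect to be the main obstacle. It is model-specific, so I would state it as holding for the classes verified in Section~\ref{example}, where both $k_m-k_{m_0}$ and $p_m$ grow linearly in $m$. If instead $p_m$ grew faster, any $\beta>0$ would still work, since a smaller exponent only weakens the requirement.

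Condition 3 then reads $\bar m_n/(\tfrac12\log n)^{2\alpha/(\beta+1)}\to0$, which with $\alpha=\beta=1$ is $\bar m_n/\log n\to0$. This is the abstract's growth restriction that the number of candidate models be $o(\log n)$. Under it all hypotheses of Theorem~\ref{thm:modsel} hold, and the decomposition above gives the corollary.
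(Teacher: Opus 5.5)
Your proposal is correct and follows the route the paper implicitly takes; the paper gives no separate proof of the corollary. The corollary is a direct application of Theorem~\ref{thm:modsel}: the only new check is Condition 2, via $a_n/b_n^2=\tfrac12(\log n)^{\delta}\to\infty$, combined with the split $\Pr(\widehat m\ne m_0)\le\Pr(\{\widehat m\ne m_0\}\cap\mathbb{F})+\Pr(\mathbb{F}^c)$. Condition 1 is model-specific (it holds with $\alpha=\beta=1$ and a constant $c_4>1$ in Propositions~\ref{pro:moe} and~\ref{pro:star}), and Condition 3 becomes $\bar m_n=o(\log n)$.

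One side remark is backwards, though it does not affect the main line. Increasing $\beta$ shrinks $p_m^{2/(\beta+1)}$ and so \emph{strengthens} Condition 1 (and also Condition 3). So if $p_m$ grew faster than $k_m-k_{m_0}$, the right conclusion is simply that $\beta=1$ still works, not that every $\beta>0$ does.
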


The BIC-type penalty above is not the sharpest possible. \citet{vanhandel2011} and \citet{gassiatvanhandel2013} identify a minimal admissible penalty rate of order $\log\log n$. However, under the method of proof used here, we cannot deliver this sharp rate for the penalty.

\section{Example}\label{example}
In this section we verify the linear representation condition for two classes of nonlinear regression models: (1) a convex combination of nonlinear functions (mixture-of-experts), and (2) multiple-regime smooth transition regression models. 
The gating and transition-function devices used draw on \citet{olteanurynkiewicz2011} and \citet{luukkonenetal1988}, respectively, though the specific reparametrizations below are tailored to make Assumptions~\ref{asn:ident}--\ref{asn:linapprox} verifiable directly. 
In both examples, we reparametrize the model with respect to an identifiable direction $\psi$ and a nuisance direction $\eta$, and show that $f(\cdot;\psi,\eta)-f(\cdot;\psi_0,\eta)$ is linear in $\psi$ for every fixed $\eta$. 

\subsection{Convex combination of nonlinear functions}\label{ss:moe}
In this first example we consider a convex combination of nonlinear functions $g(\cdot;\eta_j)$, indexed by $\eta\in E$, and weights given by $\pi_i(\cdot;\nu)$, indexed by $\nu\in V_m$, with $\pi_m(\cdot;\nu) = 1-\sum_{i=1}^{m-1}pi_i(\cdot;\nu)$. More precisely, we are interested in functions of the form
\begin{multline}
	\F_m:=\left\{ f(\cdot;\theta) = \sum_{i=1}^m\pi_i(\cdot;\nu)\,g(\cdot;\eta_i):\, \theta = (\eta_1,\dots,\eta_m,\nu)\in E^m\times V_m,\right.\\ \left.\pi_i(\cdot;\nu)>0,\, \sum_{i=1}^m \pi_i(\cdot;\nu) = 1\right\}.
	\label{eq:convexclass}
\end{multline}

We restrict $g(\cdot;\eta)$ to a generalized linear model (GLM) expert, $g(x;\eta) := \mu(\eta'x)$, where $\mu:\Re\rightarrow\Re$ is a known, bounded, measurable inverse-link function and $\eta\in E\subset\Re^{q}$ is compact, $x\in\mathcal{X}\subset\Re^{q}$ as before. We also specialize the gating: fix any valid gating $\bar\pi_i(\cdot;\bar\nu)$, $i=1,\dots,m_0$, for the $m_0$-component sub-model (i.e.\ satisfying the constraints of \eqref{eq:convexclass} with $m=m_0$), and for $m>m_0$ write the $m-m_0$ additional components' gating through a scalar loading $\omega_j\ge0$ times a bounded inclusion shape $s(x;\nu_j) := \left(1+\exp\{-\nu_j'x\}\right)^{-1}\in(0,1)$, $\nu_j\in\mathcal{N}\subset\Re^{q}$ compact. Explicitly, $\nu\in V_m$ of \eqref{eq:convexclass} decomposes as $\nu=(\bar\nu,(\nu_j,\omega_j)_{j=m_0+1}^m)$, with
\begin{equation}
	\pi_i(x;\nu) = \bar\pi_i(x;\bar\nu)\Big(1-\sum_{j=m_0+1}^m\omega_j s(x;\nu_j)\Big),\ i\le m_0, \qquad \pi_j(x;\nu) = \omega_j\,s(x;\nu_j),\ j>m_0,
	\label{eq:moeclass}
\end{equation}
and $\omega=(\omega_{m_0+1},\dots,\omega_m)$ restricted to the simplex $\{\omega_j\ge0,\sum_{j>m_0}\omega_j\le1\}$, which guarantees $\pi_i(x;\nu)>0$ for every $i$ and $x$ since $0<s(\cdot;\nu_j)<1$.

Setting $\omega=0$ collapses \eqref{eq:moeclass} to $\pi_i(x;\nu)=\bar\pi_i(x;\bar\nu)$, $i\le m_0$, and $\pi_j\equiv0$, $j>m_0$, that is, $f_m(\cdot;\theta)$ reduces exactly to the $m_0$-component sub-model, for every value of $(\nu_j,\eta_j)_{j=m_0+1}^m$. Write $\psi:=\omega\in\Re^{m-m_0}$ and $\eta:=(\nu_j,\eta_j)_{j=m_0+1}^m\in H_m := (\mathcal{N}\times E)^{m-m_0}$; with the core $(\bar\nu,\eta_{1:m_0})$ held fixed at its true value $\theta_{m_0}^0$ as throughout this section, $\psi_0=0$ gives $f(\cdot\,;\psi_0,\eta)=f_0$ for every $\eta\in H_m$. For $\eta$ fixed such that the functions $\{s(\cdot;\nu_j)[g(\cdot;\eta_j)-f_0]\}_{j=m_0+1}^m$ are linearly independent and none is a.e.\ zero, a mild non-collinearity condition satisfied for a generic choice of $\eta$, since it fails only if some excess expert exactly reproduces $f_0$ or two excess components induce the same direction, the map $\psi\mapsto f(\cdot\,;\psi,\eta)$ is one-to-one, verifying Assumption~\ref{asn:ident} with $k_m-k_{m_0}=(2q+1)(m-m_0)$: each of the $m-m_0$ excess components contributes one loading $\omega_j$, $q$ shape parameters $\nu_j$, and $q$ expert parameters $\eta_j$.

For Assumption~\ref{asn:linapprox}, fix $\eta\in H_m$ arbitrary and let $\psi\rightarrow\psi_0=0$. Since $\pi_i(x;\nu)$ in \eqref{eq:moeclass} is affine in $\omega$ for fixed $(\bar\nu,\nu_{m_0+1:m})$,
\[
	f(x;\psi,\eta) - f(x;\psi_0,\eta) = \sum_{j=m_0+1}^m\omega_j\,s(x;\nu_j)\big[g(x;\eta_j)-f_0(x)\big] = h'l_\eta(x),
\]
exactly, with $h=\omega\in\Re^{m-m_0}$ and $l_\eta(x):=\big(s(x;\nu_{m_0+1})[g(x;\eta_{m_0+1})-f_0(x)],\dots\big)'$, a vector of $p_m:=m-m_0$ bounded functions ($s\in(0,1)$; $h(\cdot;\eta_j)=\mu(\eta_j'x)$ bounded since $\mu$ is bounded; $f_0$ bounded). As in 

\begin{remark}
	This verification does not require $\theta\mapsto f(\cdot;\theta)$ to be injective on all of $\Theta_m$. Full injectivity is the classical identifiability question for mixture-of-experts models, studied by \citet{jiangtanner1999c}. It can fail for reasons unrelated to the ones used here: beyond label switching, generic GLM experts can coincide as functions of $x$ on a lower-dimensional set even when $\xi_j\ne\xi_k$. Assumption~\ref{asn:ident} asks only for identifiability of $\psi$ given $\eta$ on the sub-family \eqref{eq:moeclass}. The non-collinearity condition above establishes exactly that, independently of how the classical injectivity question resolves for GLM-MoE models in general.
\end{remark}

\begin{proposition}[Order selection for GLM mixture-of-experts models]\label{pro:moe}
	Let $\F_m$, $m\ge1$, be given by \eqref{eq:convexclass} with gating specialized as in \eqref{eq:moeclass} and experts $g(x;\eta)=\mu(\eta'x)$ for a bounded inverse-link $\mu$, and suppose $\{Y_i,X_i\}$ satisfy the conditions of Theorem~\ref{thm:modsel}. Then $\mathcal{G}_m$ admits a linear representation with $p_m=m-m_0$ basis functions and $k_m-k_{m_0}=(2q+1)(m-m_0)$, so that Assumption~\ref{asn:ident} and Condition 1 of Theorem~\ref{thm:modsel} hold with $\alpha=\beta=1$ and $c_3=c_4=2q+1$. As in Proposition~\ref{pro:star}, Condition 3 of Theorem~\ref{thm:modsel} reduces to $\bar m_n=o(a_n)$, and under the BIC penalty $a_n=\tfrac12\log n$, $\widehat m\rightarrow m_0$ in probability on $\mathbb{F}$ for any $\bar m_n=o(\log n)$.
\end{proposition}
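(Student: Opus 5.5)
The plan is to reduce Proposition~\ref{pro:moe} to Theorem~\ref{thm:modsel} by checking its hypotheses one at a time, using the explicit reparametrization \eqref{eq:moeclass} already set up in Section~\ref{ss:moe}.

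\emph{Linear representation.} Fix $m>m_0$ and $\eta=(\nu_j,\eta_j)_{j>m_0}\in H_m$. Because each $\pi_i$ in \eqref{eq:moeclass} is affine in $\omega$, the displayed identity gives
\[
f(\cdot;\psi,\eta)-f_0=h'l_\eta ,\qquad h=\omega .
\]
This is exact, so Assumption~\ref{asn:linapprox} holds with $M=0$ and $h(\psi)=\psi$.

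The constant $\bar c$ is uniform in $m$. The loadings lie in the simplex $\{\omega_j\ge0,\ \sum_j\omega_j\le1\}$, so $|h|\le1$ and we may take $\bar c=1$.

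The basis functions are uniformly bounded. Since $0<s<1$, $|\mu|\le\|\mu\|_\infty$ and $f_0$ is bounded,
\[
|l_{\eta,j}(x)|\le \|\mu\|_\infty+\sup_x|f_0(x)| .
\]
This bound does not depend on $m$ or $\eta$.

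For linear independence of the $l_{\eta,j}$ I would invoke the non-collinearity condition. Where it fails, I would drop redundant components, which only decreases the number of basis functions, so one can take $p_m\le m-m_0$. Remark~\ref{rem:bridge} then turns this into a linear representation of $\mathcal{G}_m$ in the sense of Definition~\ref{def:linrep} with $p_m=m-m_0$. Intersecting with $\mathbb{F}_1$ is harmless, because every $f\in\F_m$ is bounded by $\|\mu\|_\infty$ (it is a convex combination of bounded experts). Hence $\mathbb{F}_1$ holds with $b_n=1$.

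\emph{Condition 1.} The sequence $k_m$ is strictly increasing. Each excess component adds the $2q+1$ parameters $(\omega_j,\nu_j,\eta_j)$, so $k_m-k_{m_0}=(2q+1)(m-m_0)$. The lower bound $c_3|m-m_0|^\alpha$ holds with $\alpha=1$ and $c_3=2q+1$. For the upper bound take $\beta=1$, so that $p_m^{2/(\beta+1)}=p_m=m-m_0$; this gives $c_4=2q+1$.

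\emph{Conditions 2 and 3, and the BIC case.} With $\alpha=\beta=1$ we have $a_n^{2\alpha/(\beta+1)}=a_n$, so Condition~3 becomes $\bar m_n=o(a_n)$. For the BIC choice $a_n=\tfrac12\log n$, Condition~2 holds because $b_n=1$: then $a_n/b_n^2\to\infty$ and $a_n/n\to0$. Condition~3 reads $\bar m_n=o(\log n)$. Theorem~\ref{thm:modsel} then gives $\Pr(\widehat m\ne m_0\cap\mathbb{F})\to0$, which is the claimed consistency on $\mathbb{F}$.

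The main obstacle is that Theorem~\ref{thm:modsel} needs a linear representation of all of $\mathcal{G}_m=\{f-f_0: f\in\F_m\}$, whereas the computation above covers only the sub-family in which the core $(\bar\nu,\eta_{1:m_0})$ is held at its true value. The issue is $f\in\F_m$ whose core is away from the truth. My first attempt would be to separate $\F_m$ into two parts.
\begin{itemize}
\item Functions whose $m_0$-component part is bounded away from $f_0$ in $L^2$. For these, the excess residual sum of squares is of order one, as in the $\mathbb{F}_2$ argument for underfitting, so they are not selected.
\item A neighbourhood of the true core. By identifiability of $\F_{m_0}$ (Assumption~\ref{asn:ident}.1), this neighbourhood can be absorbed by adding the finitely many, $m$-independent, core derivative directions to $l_\eta$. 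This would require $\mu$ and $\bar\pi$ to be $C^2$, and the count becomes $p_m=m-m_0+\dim\theta_{m_0}$, which still satisfies $p_m\lesssim m-m_0$ for $m>m_0$ with an adjusted $c_4$.
\end{itemize}
If this does not go through cleanly, I would fall back on following the paper's convention that the core is fixed at $\theta^0_{m_0}$.
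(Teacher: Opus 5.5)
Your verification matches the paper's proof step for step: the exact affine identity with $h=\omega$ and $p_m=m-m_0$, $\alpha=\beta=1$ with $c_3=c_4=2q+1$, $b_n$ constant because $\sup_{f\in\F_m}|f|\le\sup|\mu|$, and Condition 3 reducing to $\bar m_n=o(a_n)$; the paper also simply holds the core at $\theta^0_{m_0}$, so your fallback is exactly its argument. Your proposed repair of that convention would not work as stated, because a core far from the truth need not make $f$ far from $f_0$ when excess components compensate (with identity link, $m_0=1$, $\omega_2=1$ and $\nu_2=0$, the experts $\eta^0_1\pm\delta$ average exactly to $f_0$), and dropping dependent basis functions is not harmless, since shrinking $p_m$ works against the upper bound $k_m-k_{m_0}\le c_4p_m$ in Condition 1.
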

\begin{proof}
	The linear representation and $p_m$ are established above. With $k_m-k_{m_0}=(2q+1)(m-m_0)$, Condition 1 of Theorem~\ref{thm:modsel} holds with $\alpha=1$, $c_3=2q+1$ (equality), and $\beta=1$, since $c_4p_m^{2/(\beta+1)}=c_4p_m=c_4(m-m_0)\ge(2q+1)(m-m_0)$ once $c_4\ge2q+1$. Since $f_m(\cdot;\theta)$ is a convex combination of terms $g(\cdot;\eta_i)=\mu(\eta_i'x)$ for a bounded inverse-link $\mu$, $\sup_{f\in\F_m}|f(x)|\le\sup_{u\in\Re}|\mu(u)|$ uniformly in $m$, so event $\mathbb{F}_1$ holds with $b_n$ equal to this fixed constant, and Condition 2 of Theorem~\ref{thm:modsel} reduces to $a_n\rightarrow\infty$. Condition 3, $\bar m_n/a_n^{2\alpha/(\beta+1)}\rightarrow0$, then reads $\bar m_n/a_n\rightarrow0$, and the BIC choice $a_n=\tfrac12\log n$ gives the stated corollary, since it also satisfies $a_n\rightarrow\infty$.
\end{proof}

	The bound above holds uniformly over any bounded inverse-link $\mu$, so it specializes directly to the canonical members of the exponential family used in practice for hierarchical mixtures-of-experts \citep{jacobsetal1991,jordanjacobs1994,jiangtanner1999a,jiangtanner1999b,jiangtanner2000}:
	\begin{itemize}
		\item \emph{Gaussian experts, identity link,} $\mu(u)=u$: recovers the mixture-of-linear-experts model of \citet{jacobsetal1991} and \citet{jordanjacobs1994}; $\mu$ is bounded on $E\times\mathcal{X}$ automatically, since both are bounded.
		\item \emph{Bernoulli experts, logit link,} $\mu(u)=(1+e^{-u})^{-1}$: appropriate when $Y_i\in[0,1]$ (e.g.\ a proportion); $\mu$ is bounded on all of $\Re$, so no further restriction on $E$ is needed for this step beyond compactness for identifiability.
		\item \emph{Poisson experts, log link,} $\mu(u)=e^u$: appropriate for count responses; here boundedness of $\mu(\eta'x)$ is not automatic and uses $\sup_{\eta\in E,x\in\mathcal{X}}|\eta'x|<\infty$, which already follows from $E$ and $\mathcal{X}$ compact/bounded -- the same regularity the general theorem imposes through $\mathbb{F}_1$, not an additional requirement.
	\end{itemize}
	In every case $p_m$, $k_m-k_{m_0}$, $\alpha$, $\beta$ and the resulting $\bar m_n=o(\log n)$ corollary under BIC are unchanged, since the argument above never used the specific form of $\mu$ beyond boundedness.

\subsection{Multiple-regime smooth transition regression models}\label{ss:star}
We now verify Assumptions~\ref{asn:ident} and~\ref{asn:linapprox} for the multiple-regime logistic smooth transition regression (STR) model of \citet{luukkonenetal1988} and \citet{vandijkterasvirta2002}. For $m\ge1$ regimes, let
\begin{multline}
	\F_m := \Bigg\{ f(\cdot\,;\theta) = \phi_0'x + \sum_{j=1}^{m-1}\phi_j'x\, G(s;\lambda_j,\tau_j):\ \theta = (\phi_0,\dots,\phi_{m-1},\lambda_1,\dots,\lambda_{m-1},\tau_1,\dots,\tau_{m-1}),\\ \phi_j\in\Re^{q},\ \textstyle\sum_{j=0}^{m-1}|\phi_j|\le\bar\Phi,\ \lambda_j\in[0,\bar\lambda],\ \tau_j\in\mathcal{T} \Bigg\},
	\label{eq:starclass}
\end{multline}
where $x\in\mathcal{X}\subset\Re^{q}$ is a vector of (possibly lagged) regressors and $s\in\mathcal{S}\subset\Re$ is the transition variable. Both $\mathcal{T}\subset\Re$ and $[0,\bar\lambda]$ are compact, and
\[
	G(s;\lambda,\tau) := \left(1+\exp\{-\lambda(s-\tau)\}\right)^{-1}
\]
is the logistic transition function.

As is standard in this literature \citep{vandijkterasvirta2002,terasvirtaetal2005}, the slope parameters $\lambda_j$ are restricted to a compact interval. This rules out a separate loss of identifiability that occurs as $\lambda_j\rightarrow\infty$, unrelated to the one this paper addresses. The joint bound $\sum_j|\phi_j|\le\bar\Phi$, for a fixed constant $\bar\Phi$ not growing with $m$, plays a different role. Regimes enter \emph{additively} here, not as a convex combination; contrast the gating of Section~\ref{ss:moe}, where $\sum_i\pi_i=1$ keeps $f_m$ uniformly bounded for free. Bounding each $\phi_j$ individually would not be enough to keep $\sup_{f\in\F_m}|f(x)|$ from growing with $m$: nothing would then stop all $m-1$ regimes from being simultaneously near their largest loading. The fixed total budget $\bar\Phi$ instead gives $\sup_{f\in\F_m}|f(x)|\le\bar\Phi\sup_{x\in\mathcal{X}}|x|$ uniformly in $m$, exactly the boundedness event $\mathbb{F}_1$ of Theorem~\ref{thm:modsel} requires.

This is a mild, standard sieve-type restriction. The true $\theta_{m_0}^0$ satisfies it as long as $\bar\Phi>\sum_{j<m_0}|\phi_j^0|$, a one-time requirement on the fixed constant $\bar\Phi$, since $m_0$ and $\theta_{m_0}^0$ do not change with $n$. It affects neither the dimension count $k_m-k_{m_0}$ nor the linear-representation verification below, which only needs $\psi$ to range over a neighbourhood of $0$ within this constraint set, still convex and containing an open neighbourhood of $0$. The classes are nested, $\F_1\subset\F_2\subset\cdots$. Setting $\phi_j=0$ for every $j=m_0,\dots,m-1$ collapses $f(\cdot\,;\theta)\in\F_m$ to a member of $\F_{m_0}$, identically in $(\lambda_j,\tau_j)_{j=m_0}^{m-1}$. Since $G$ is bounded, an inactive regime contributes nothing to the regression function, regardless of its slope or location.

Fix $m>m_0$. Write the excess-regime parameters as $\psi:=(\phi_{m_0},\dots,\phi_{m-1})\in\Psi_m$, where
\[
	\Psi_m:=\Big\{\textstyle\sum_{j=m_0}^{m-1}|\phi_j|\le\bar\Phi-\sum_{j<m_0}|\phi_j^0|\Big\}\subset\Re^{q(m-m_0)}.
\]
By the requirement on $\bar\Phi$ above, $\Psi_m$ is nonempty and contains a neighbourhood of $0$. Write the remaining parameters as $\eta:=(\lambda_j,\tau_j)_{j=m_0}^{m-1}\in H_m:=([0,\bar\lambda]\times\mathcal{T})^{m-m_0}$. Fix $\eta$. Suppose $G(\cdot;\lambda_j,\tau_j)$ is not almost-everywhere constant across $j$, and $\{x_t G(s_t;\lambda_j,\tau_j)\}_{j=m_0}^{m-1}$ are linearly independent. This is a mild non-collinearity condition, satisfied for a generic choice of $\eta$. Under it, the map $\psi\mapsto f(\cdot\,;\psi,\eta)$ is one-to-one, so $\psi$ is identifiable. The nuisance parameter $\eta$ need not be: as $\lambda_j\rightarrow0$, $\tau_j$ ceases to enter the model at all. Together with $\psi_0=0$ and the nesting argument above, this verifies Assumption~\ref{asn:ident}. Each of the $m-m_0$ excess regimes contributes $q$ loading parameters and one pair $(\lambda_j,\tau_j)$, so $k_m-k_{m_0} = (q+2)(m-m_0)$.

For Assumption~\ref{asn:linapprox}, fix $\eta\in H_m$ arbitrary and let $\psi\rightarrow\psi_0=0$. Because $f(\cdot\,;\psi,\eta)$ is \emph{exactly} linear in $\psi$ for every fixed $\eta$,
\[
	f(x;\psi,\eta) - f(x;\psi_0,\eta) = \sum_{j=m_0}^{m-1}\phi_j'x\,G(s;\lambda_j,\tau_j) = h'l_\eta(x),
\]
with $h=(\phi_{m_0},\dots,\phi_{m-1})\in\Re^{q(m-m_0)}$ and $l_\eta(x) := \big(x'G(s;\lambda_{m_0},\tau_{m_0}),\dots,x'G(s;\lambda_{m-1},\tau_{m-1})\big)'$, a vector of $p_m:=q(m-m_0)$ bounded functions (bounded because $x\in\mathcal{X}$ is bounded, per $\mathbb{F}_1$, and $G\in(0,1)$). The approximation error is identically zero, so Assumption~\ref{asn:linapprox} holds with $o(|h|)\equiv0$ for every $\eta\in H_m$, not merely as $\eta$ ranges over a shrinking neighbourhood. In particular, $\mathcal{G}_m=\{f-f_0:f\in\F_m\}\cap\mathbb{F}_1$ admits the linear representation of Definition~\ref{def:linrep} with $p_m=q(m-m_0)$ basis functions.

\begin{remark}
	A regime becomes redundant in two ways here: an inactive loading, $\phi_j=0$, or a vanishing slope, $\lambda_j\rightarrow0$. These play asymmetric roles. Only $\phi_j=0$ collapses $f_m$ to $f_{m_0}$ as a function identity, for every value of $(\lambda_j,\tau_j)$. A vanishing slope with $\phi_j\ne0$ merges the regime's loading into the baseline term $\phi_0'x$ instead, and generically produces a member of $\F_{m-1}$ different from $f_0$. That $\tau_j$ is not identified as $\lambda_j\rightarrow0$ is a classical phenomenon of unidentified nuisance parameter under the null \citep{davies1977,davies1987}, central to the sequential linearity tests of \citet{luukkonenetal1988}. It is not, by itself, a source of non-identifiability for the map $\psi\mapsto f(\cdot\,;\psi,\eta)$ used above. So it plays no role in verifying Assumption~\ref{asn:ident}--\ref{asn:linapprox} for order selection.
\end{remark}

\begin{proposition}[Order selection for multiple-regime STR models]\label{pro:star}
	Let $\F_m$, $m\ge1$, be given by \eqref{eq:starclass}, and suppose $\{Y_i,X_i,S_i\}$ satisfy the conditions of Theorem~\ref{thm:modsel}. Then $\mathcal{G}_m$ admits a linear representation with $p_m=q(m-m_0)$ basis functions and $k_m-k_{m_0}=(q+2)(m-m_0)$, so that Assumption~\ref{asn:ident} and Condition 1 of Theorem~\ref{thm:modsel} hold with $\alpha=\beta=1$, $c_3=q+2$ and $c_4=(q+2)/q$. Consequently, Condition 3 of Theorem~\ref{thm:modsel} reduces to $\bar m_n = o(a_n)$, and under the BIC penalty $a_n=\tfrac12\log n$, $\widehat m\rightarrow m_0$ in probability on $\mathbb{F}$ for any $\bar m_n=o(\log n)$.
\end{proposition}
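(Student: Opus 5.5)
The proof is a bookkeeping check of the hypotheses of Theorem~\ref{thm:modsel}, using the verification carried out in Section~\ref{ss:star}; it parallels the proof of Proposition~\ref{pro:moe}. I will proceed in four steps.

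\textbf{Step 1: linear representation and identifiability.} I take from the discussion preceding the statement the exact identity
\[
f(x;\psi,\eta)-f(x;\psi_0,\eta)=h'l_\eta(x),
\]
which has zero remainder and holds with $h=(\phi_{m_0},\dots,\phi_{m-1})$ and $\psi_0=0$. The basis functions are $l_\eta=(x'G(s;\lambda_j,\tau_j))_j$. There are $p_m=q(m-m_0)$ of them, and they are bounded by $\bar x$ because $G\in(0,1)$. On $\Psi_m$ we have $|h|\le\bar\Phi$, so $\bar c=\bar\Phi$ serves as a bound uniform in $m$. Hence Assumption~\ref{asn:linapprox} holds with $M=0$. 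By Remark~\ref{rem:bridge}, $\mathcal{G}_m$ then admits the linear representation of Definition~\ref{def:linrep} with $p_m$ basis functions. Assumption~\ref{asn:ident} follows from the nesting argument (setting $\phi_j=0$ collapses the model to $f_0$ for every $\eta$) together with the non-collinearity condition, which makes $\psi\mapsto f(\cdot;\psi,\eta)$ injective.

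\textbf{Step 2: Condition 1.} Each excess regime contributes $q$ loadings and two transition parameters, so $k_m-k_{m_0}=(q+2)(m-m_0)$. This gives the lower bound with $\alpha=1$ and $c_3=q+2$. For the upper bound take $\beta=1$, so that $p_m^{2/(\beta+1)}=p_m=q(m-m_0)$. Then $c_4p_m=(q+2)(m-m_0)$ exactly when $c_4=(q+2)/q$. Monotonicity $0<k_m<k_{m+1}$ holds because every added regime adds $q+2>0$ parameters.

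\textbf{Step 3: Condition 2.} The budget $\sum_j|\phi_j|\le\bar\Phi$ gives
\[
\sup_{f\in\F_m}|f(x)|\le\bar\Phi\,\bar x
\]
uniformly in $m$, since $G\in(0,1)$. So $\mathbb{F}_1$ holds with a constant $b_n$, and Condition 2 reduces to $a_n\uparrow\infty$ with $a_n/n\to0$. The BIC choice $a_n=\tfrac12\log n$ satisfies both.

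\textbf{Step 4: Condition 3 and conclusion.} With $\alpha=\beta=1$ the exponent is $2\alpha/(\beta+1)=1$, so Condition 3 becomes $\bar m_n/a_n\to0$, that is $\bar m_n=o(a_n)$. Under BIC this is $\bar m_n=o(\log n)$. Theorem~\ref{thm:modsel} then yields $\Pr(\widehat m\ne m_0\cap\mathbb{F})\to0$.

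The only delicate point is Step 1, in its interplay with the requirement that the linear-independence and uniform constants not depend on $\eta$. Remark~\ref{rem:bridge} needs only boundedness of $l_\eta$ and $|h|\le\bar c$ uniformly, and both follow from compactness and the budget $\bar\Phi$. Linear independence is needed only generically in $\eta$, which is what the non-collinearity condition supplies. Beyond this, the argument is routine arithmetic.
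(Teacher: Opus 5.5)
Your proposal is correct and follows the paper's own proof essentially step by step. You take the exact linear representation with $p_m=q(m-m_0)$ from Section~\ref{ss:star}, check Condition~1 with $\alpha=\beta=1$, $c_3=q+2$, $c_4=(q+2)/q$, use the budget $\sum_j|\phi_j|\le\bar\Phi$ to get $\sup_f|f|\le\bar\Phi\,\bar x$ with constant $b_n$, and reduce Condition~3 to $\bar m_n=o(a_n)$. The extra details you supply (the uniform bound $|h|\le\bar\Phi$ on $\Psi_m$, the increment $k_{m+1}-k_m=q+2>0$, and $a_n/n\to0$ for the BIC choice) are points the paper leaves implicit, not a different route.
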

\begin{proof}
	The linear representation and the value of $p_m$ are established above. For $k_m-k_{m_0}=(q+2)(m-m_0)$, Condition 1 of Theorem~\ref{thm:modsel} holds with $\alpha=1$ and $c_3=q+2$ (with equality), and with $\beta=1$, since $c_4 p_m^{2/(\beta+1)}=c_4 p_m = c_4 q(m-m_0) \ge (q+2)(m-m_0)$ once $c_4\ge(q+2)/q$. Since $\mathcal{X}$ is compact and $\sum_{j=0}^{m-1}|\phi_j|\le\bar\Phi$ for every $f\in\F_m$, $\sup_{f\in\F_m}|f(x)|\le\bar\Phi\,\bar x$ uniformly in $m$, so event $\mathbb{F}_1$ holds with $b_n$ equal to the fixed constant $\bar\Phi\,\bar x$, and Condition 2 of Theorem~\ref{thm:modsel} reduces to $a_n\rightarrow\infty$. Condition 3 of Theorem~\ref{thm:modsel}, $\bar m_n/a_n^{2\alpha/(\beta+1)}\rightarrow0$, then reads $\bar m_n/a_n\rightarrow0$. The BIC choice $a_n=\tfrac12\log n$ gives the stated corollary, since it also satisfies $a_n\rightarrow\infty$.
\end{proof}

\section{Discussion}\label{discussion}

Theorem~\ref{thm:modsel} delivers a classical BIC-type penalty rate, $a_n=\tfrac12\log n$, and only probability-converging-to-one consistency of $\widehat m$. \citet{gassiatvanhandel2013} obtain a sharper minimal penalty rate, of order $\log\log n$, and almost sure consistency. Their work is limited to the i.i.d. location-mixture model. The result is obtained via a sharp uniform law of the iterated logarithm for the pathwise fluctuations of the generalized likelihood ratio. Dependence alone is not what stands in the way. \citet{vanhandel2011} obtains the same $\log\log n$ minimal penalty to Markov chain order estimation. The proof technique of Section~\ref{s:proof} takes a different approach. It bounds each fixed alternative's probability, then sums over the diverging family with a union bound. This gives a coarser penalty rate than the sharp, pathwise characterization above. It also leaves the resulting probability bound too weak to sum over $n$, that is, a Borel--Cantelli argument for strong consistency is not available. Extending the sharp characterization to a general nonlinear least squares criterion, evaluated over an abstract family defined only through Assumption~\ref{asn:linapprox}'s linear representation, is a natural candidate for closing both gaps, though there is no guarantee it succeeds under our general assumptions. A second-order, local quadratic strengthening of that assumption would be the natural starting point. We leave this extension for future work.

Two further extensions lie outside the paper's present scope. Theorem~\ref{thm:modsel} applies to any nested family admitting the linear representation of Definition~\ref{def:linrep}, and a natural third case beyond the mixture-of-experts and smooth transition classes of Section~\ref{example} is the multilayer perceptron with redundant hidden units studied by \citet{rynkiewicz2016}. Adapting that paper's generalized derivative functions to verify Assumption~\ref{asn:linapprox} directly, not only the local behaviour a fixed-alternative test needs, would extend the diverging-family result to this model class. Separately, this note addresses only the order-selection step, $\widehat m\to m_0$, and says nothing about the resulting parameter estimator $\widehat\theta_{\widehat m}$. Standard post-model-selection complications, such as non-uniformity of inference over local alternatives near $m_0$, may apply here as they do elsewhere in this literature. Both directions are left for future work.

\section{Proof of the main result}\label{s:proof}
In this section we prove the main result of the paper. The proof is based on a Bernstein inequality for discrete time martingales \citep[pg. 135][Lemma 8.9]{m-estimation} and a corresponding uniform inequality for empirical processes \citep[pg. 139][Theorem 8.13]{m-estimation}. These results and auxiliary definitions are given below for reference. 
We also make use of generic constants $c$ that may change its values in each line. It causes no harm to the theory as we are are interested in the behavior as $n$ and $m$ change.

Let $f$ denote some function in $\cup_{m\ge m_0}\F_m$. Consider a probability space $(\Omega, \Z, P)$ and let $\Z_0\subset\Z_1\subset\cdots$ denote an increasing sequences of sub-sigma algebras of $\Z$. For $i\ge1$, let $Z_i(f) = U_i(f(X_i)-f_0(X_i)) =U_ig_i(f)$ and $\Z_i = \sigma\{X_1,\dots,X_i,U_1,\dots,U_i\}$; by assumption, $\E[Z_i(f)|\Z_{i-1}] = 0$. We write the martingales $S_0(f)=0$ and $S_n(f) = \sum_{i=1}^nZ_i(f)$. 

\begin{theorem}[Bernstein inequality, {\citep[][Lemma 8.9]{m-estimation}}]
	Let $n$ be fixed. Suppose that for some $\Z_{n-1}$ measurable random variable $R_n^2(f)$,
	\begin{equation}
		n^{-1}\sum_{i=1}^n\E\left[ |Z_i(f)|^m|\Z_{i-1} \right]\le\frac{m!}{2}K^{m-2}R_n^2, \quad m=2,3,\dots .
		\label{eq:condition}
	\end{equation}
	Then, for all $a>0$, $R>0$,
	\begin{equation}
		P\left( S_n(f) > a \wedge R_n\le R \right) \le \exp\left[ -\frac{a^2}{2(aK+nR^2)} \right].
		\label{eq:bernstein}
	\end{equation}
	\label{thm:bernstein}
\end{theorem}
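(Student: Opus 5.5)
The statement is the martingale Bernstein inequality quoted from \citet{m-estimation}, and I would prove it with the standard exponential-supermartingale (Chernoff) argument. Fix $\lambda\in(0,1/K)$. For each $i$, expand the conditional moment generating function and use $\E[Z_i(f)|\Z_{i-1}]=0$:
\[
\E[e^{\lambda Z_i}|\Z_{i-1}] \le 1+\sum_{m\ge2}\frac{\lambda^m}{m!}\E[|Z_i|^m|\Z_{i-1}] \le \exp\Big(\sum_{m\ge2}\frac{\lambda^m}{m!}\E[|Z_i|^m|\Z_{i-1}]\Big)=:e^{V_i(\lambda)},
\]
where the last step uses $1+x\le e^x$. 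Each $V_i(\lambda)$ is $\Z_{i-1}$-measurable, so it is predictable. Hence
\[
M_k:=\exp\Big(\lambda S_k-\sum_{i\le k}V_i(\lambda)\Big)
\]
is a nonnegative supermartingale with $M_0=1$, and therefore $\E M_n\le1$.

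The next step is to bound the compensator $\sum_{i\le n}V_i(\lambda)$ using hypothesis \eqref{eq:condition}. Interchanging the sums over $i$ and $m$ (all terms are nonnegative) gives
\[
\sum_{i\le n}V_i(\lambda)\le n\sum_{m\ge2}\frac{\lambda^m}{2}K^{m-2}R_n^2=\frac{n\lambda^2R_n^2}{2(1-\lambda K)}.
\]
The geometric series converges precisely because $\lambda K<1$. On the event $\{S_n(f)>a\}\cap\{R_n\le R\}$ we then have
\[
M_n\ge\exp\Big(\lambda a-\frac{n\lambda^2R^2}{2(1-\lambda K)}\Big).
\]
Markov's inequality applied to $M_n$ therefore yields
\[
P\big(S_n(f)>a,\;R_n\le R\big)\le\exp\Big(-\lambda a+\frac{n\lambda^2R^2}{2(1-\lambda K)}\Big).
\]

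Finally, I would optimize over $\lambda$ by choosing $\lambda=a/(aK+nR^2)$. This choice satisfies $\lambda K<1$ and gives $1-\lambda K=nR^2/(aK+nR^2)$. Substituting, the exponent becomes
\[
-\frac{a^2}{aK+nR^2}+\frac{a^2}{2(aK+nR^2)}=-\frac{a^2}{2(aK+nR^2)},
\]
which is \eqref{eq:bernstein}.

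I expect the main point of care to be the supermartingale step. Hypothesis \eqref{eq:condition} controls only the \emph{average} over $i$ of the conditional moments, not each term separately, so the bound must be applied to the accumulated predictable compensator $\sum_iV_i$ rather than term by term. This is legitimate because $R_n$ is $\Z_{n-1}$-measurable and the comparison is only used on the event $\{R_n\le R\}$ at the terminal time $n$, where Markov's inequality is applied to $M_n$ itself. Apart from this, the argument only needs the interchange of the $i$- and $m$-sums to be justified (monotone convergence) and $\lambda<1/K$.
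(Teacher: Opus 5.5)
Your proof is correct. The paper does not prove this statement; it imports it from \citet{m-estimation} (Lemma 8.9). Your exponential-supermartingale argument (compensator bounded only in aggregate via the hypothesis, then Markov with $\lambda=a/(aK+nR^2)$) is the standard proof of that lemma, and it reproduces the paper's normalization exactly: $n^{-1}\sum$ in the hypothesis and $nR^2$ in the bound.

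One small remark: your argument never actually uses that $R_n$ is $\Z_{n-1}$-measurable. The supermartingale property needs only the predictability of the $V_i(\lambda)$, and the bound on $\sum_i V_i(\lambda)$ holds pointwise. So your closing justification that the step is legitimate because of this measurability can be dropped.
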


Define $\rho_K(Z_i(f)) = 2K^2\E\left( e^{|Z_i(f)|/K}-1-|Z_i(f)|/K \right | \Z_{i-1})$, $i=1,\dots,n$, for any $K>0$, and for $Z(f) = (Z_1(f),\dots,Z_n(f))$, write $\bar\rho_K(Z(f))^2 = n^{-1}\sum_{i=1}^n\rho_K(Z_i(f))$. Consider some measurable set $\mathbb{F}\in\Z$, the \textit{generalized entropy with bracketing} is defined as
\begin{definition}[Generalized Entropy with Bracketing]
	For $0<\delta\le R$ and $\mathbb{F}\in\Z$, let $\{[Z_j^L, Z_j^U]\}_{j=1}^n$ be a collection of pairs of random vectors $Z_j^L=(Z_{1,j}^L,\dots,Z_{n,j}^L)$ and $Z_j^U=(Z_{1,j}^U,\dots,Z_{n,j}^U)$, with $[Z_{i,j}^L,Z_{i,j}^U]$ $\Z_{i}$-measurable, $i=1,\ldots,n$, $j=1,\dots,N$, such that for all $f\in\F_m$ ($m$ fixed), there is a $j = j(f)\in\{1,\ldots,N\}$, with $j\mapsto j(f)$ non-random, such that: (i) $\bar\rho(Z_j^U-Z_j^L)\le \delta^2$ on $\{\bar\rho(Z(f))\le R\}\cap\mathbb{F}$; and (ii) $Z_{ij}^L\le Z_i(f)\le Z_{ij}^U$, $i=1,\dots,n$ on $\{\bar\rho_K(Z(f))\le R\}\cap \mathbb{F}$. If such collection exists, the \textit{generalized $\delta$-entropy with bracketing} is defined as $\mathcal{H}_{B,K}(\delta, R, \mathbb{F}) = \log N$, for the smallest non-random value $N$ for which such collection exists.
\end{definition}

Using the fact that for any positive $x$, $e^{x}\le 1+ x + (1/2) x^2 e^x$, 
\[
	\rho_K(Z_i(h)) \le \E\left( g_i(f)^2 \E\left( U_i^2\exp\left( |U_ig_i(h)|/K \right)|X_i,\Z_{i-1} \right)|\Z_{i-1} \right) \le 2c\E\left( g_i(f)^2|\Z_i \right),
\] 
if $\max_{i=1,\dots,n}\E\left( e^{\gamma |U_i|}|X_i,\Z_{i-1} \right)\le c$ and $2\max_{i=1,\dots,n}|g_i(h)|/K \le \gamma $. This simple observation relates the bound on the collection of functions $\{Z_i(f)\}$ to a bound on $\F_m$, by setting $h = f^U-f^L+f_0$, where $[f^U,f^L]$ is some (generic) bracket.

\begin{theorem}[Uniform inequality, {\citep[][Theorem 8.13]{m-estimation}}] 
	Take $C_0^2\ge C^2(C_1+1)$ where $C_0$ and $C_1$ are positive constants defined below. Assume
	\begin{equation}
		a\le C_1 \sqrt{n}R^2/K\wedge 8\sqrt{n}
		\label{eq:unifa1}
	\end{equation}
	and
	\begin{equation}
		a\ge C_0 \left(\int_{2^{-6}a/\sqrt{n}}^R \mathcal{H}_{B,K}(u, R, \mathbb{F})^{1/2} du \vee R \right).
		\label{eq:unifa2}
	\end{equation}
	Then
	\begin{multline*}
		P\left( \left(n^{-1/2}\sum_{i=1}^nZ_i(f)-\E[Z_i|\Z_{i-1}]\ge a \wedge \bar\rho_K(Z(f))\le R \mbox{ for some } f\in\F\right) \cap \mathbb{F}\right)\\
		\le C\exp\left( -\frac{a^2}{C^2(C_1+1)R^2} \right).
	\end{multline*}
	\label{thm:uniform}
\end{theorem}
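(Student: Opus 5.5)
The plan is the standard generalized bracketing-chaining argument built on the martingale Bernstein inequality of Theorem~\ref{thm:bernstein}; all work is done on $\{\bar\rho_K(Z(f))\le R\}\cap\mathbb{F}$. Since $\E[Z_i(f)|\Z_{i-1}]=0$ for all $f$ here, it suffices to control $\sup_f n^{-1/2}S_n(f)$ over the restricted class. I would fix dyadic scales $\delta_s=2^{-s}R$ for $s=0,1,\dots,S$, with $S$ the smallest integer such that $\delta_S\le 2^{-6}a/\sqrt n$. At each scale I would take a minimal generalized bracketing $\{[Z^L_{j,s},Z^U_{j,s}]\}_{j\le N_s}$ with $\log N_s=\mathcal{H}_{B,K}(\delta_s,R,\mathbb{F})$. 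For each $f$ the index $j_s(f)$ is non-random, so the chain $f\mapsto (j_0(f),\dots,j_S(f))$ ranges over at most $\prod_{t\le s}N_t$ values up to level $s$.

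The decomposition is
\[
Z_i(f)=Z^L_{i,j_0}+\sum_{s=1}^S\big(Z^L_{i,j_s}-Z^L_{i,j_{s-1}}\big)+\big(Z_i(f)-Z^L_{i,j_S}\big).
\]
Because brackets need not be nested, I would use the usual adaptive truncation: at level $s$, a link is kept only on the indices $i$ where the bracket widths at earlier levels stay below a threshold $K_s\asymp \sqrt n\,\delta_s^2/a_s$. Indices that exceed the threshold are diverted into a remainder term, which is controlled by the width $Z^U-Z^L$ through $\rho_K$-bounds. The point of the truncation is that each truncated link is a martingale-difference sum (after recentring by its conditional mean) whose summands are bounded by $K_s$ and whose $\bar\rho$-norm is of order $\delta_{s-1}$. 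The $\rho_K$ functional is designed to deliver the moment condition \eqref{eq:condition}: since $\rho_K(Z)\ge \E[|Z|^m|\Z_{i-1}]\,2K^{2-m}/m!$, the functional gives exactly the Bernstein moments.

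I would then apply Theorem~\ref{thm:bernstein} to each link with threshold $\sqrt n\,a_s$, where
\[
a_s\asymp \delta_{s-1}\sqrt{\textstyle\sum_{t\le s}\mathcal{H}_{B,K}(\delta_t)}\;\vee\;\delta_{s-1}\sqrt{s}\,a/(C R),
\]
and take a union bound over at most $\exp(\sum_{t\le s}\log N_t)$ links. Three ingredients then combine. First, $\sum_s a_s\lesssim \int_{2^{-6}a/\sqrt n}^R\mathcal{H}^{1/2}\,du\vee R$, which is at most $a/C_0$ by \eqref{eq:unifa2}. Second, the assumption $a\le C_1\sqrt n R^2/K$ from \eqref{eq:unifa1} keeps the linear term $aK$ in the Bernstein denominator dominated by $nR^2$, at scale $0$ and at every $s$ through the choice of $K_s$. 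Third, the bound $a\le 8\sqrt n$ guarantees $S$ is finite and makes the last-level remainder negligible: that remainder is bounded by $\sqrt n\,\bar\rho(Z^U-Z^L)\lesssim \sqrt n\,\delta_S^2/\ldots\le a/4$. Summing the resulting probabilities gives a geometric series $\sum_s \exp(-c\,s\,a^2/(C^2(C_1+1)R^2))$, which is bounded by $C\exp(-a^2/(C^2(C_1+1)R^2))$, with the condition $C_0^2\ge C^2(C_1+1)$ absorbing the entropy contributions.

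The main obstacle is the truncation step under dependence. I need to show that the diverted (truncated-away) parts have a conditional mean small enough, of order $\sqrt n\,\delta_s^2/K_s$, to be absorbed into $a/4$. This must hold uniformly on the event, even though the brackets are only $\Z_i$-measurable and the bounds hold only on $\{\bar\rho_K\le R\}\cap\mathbb{F}$. Here I would use $\E[|Z|1\{|Z|>K_s\}|\Z_{i-1}]\le \rho_K(Z)/K_s$, which follows from the definition of $\rho_K$. I would also split each sum into a martingale part, handled by Bernstein, and a compensator part, bounded deterministically on the event. The remaining work is bookkeeping of constants, for which I would follow the structure of \citep[][Theorem 8.13]{m-estimation}.
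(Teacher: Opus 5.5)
Your sketch (generalized brackets with non-random index maps, dyadic chaining with adaptive truncation, the martingale Bernstein inequality fed by $\E[|Z|^m\mid\Z_{i-1}]\le\frac{m!}{2}K^{m-2}\rho_K(Z)$, a union bound over at most $\prod_{t\le s}N_t$ chains, and truncated-away conditional means bounded by $\rho_K/K_s$) is exactly the argument behind Theorem 8.13 of \citet{m-estimation}, which the paper does not re-prove but simply invokes, so your route coincides with the paper's. The one bookkeeping slip to fix is the stopping level: take $S=\min\{s:\delta_s\le 2^{-4}a/\sqrt n\}$ rather than $2^{-6}a/\sqrt n$, so that every level you chain over has $\delta_{s+1}> 2^{-6}a/\sqrt n$ and its entropy is controlled by the integral in \eqref{eq:unifa2}, while $\sqrt n\,\delta_S\le a/16$ still makes the last remainder negligible; it is this choice of $S$, not the bound $a\le 8\sqrt n R$ (the scale-consistent form of the second constraint in \eqref{eq:unifa1}), that handles the last remainder, and the latter's actual role is to guarantee $S\ge 1$.
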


Before stating the proof of the main theorem, we show three preliminary results. The first one relates the squared norm, $\|g(f)\|_n^2$, based on the observations, with the filtered random variable $H_n(f)^2 = n^{-1}\sum_{i=1}^n\E[g_i(f)^2|\Z_{i-1}]$, used in the proofs. The second proposition finds the convergence rate of the estimator for $m\ge m_0$. Finally, the third one shows that the nonlinear least squares estimator for $m\le m_0-1$ cannot consistently estimate $f_0$.
\begin{proposition}
	Fix $n$ sufficiently large and $f\in\F_m$ (for some $m\ge 1$). There exist generic, positive constants $c_1$ and $c_2$ not depending on $n$, $m$ or $f$ such that
	\begin{equation}
		\Pr\left( \|g(f)\|_n^2 < \frac{1}{2}H_n(f)^2 - \nu_n \cap \mathbb{F} \right)\le c_1\,e^{-n\nu_n/(c_2\, b_n^2)},
		\label{eq:bndhn}
	\end{equation}
	\label{pro:bndvar}
for any positive sequence $\nu_n$.
\end{proposition}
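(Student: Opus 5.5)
Write $g_i:=g_i(f)=f(X_i)-f_0(X_i)$ and set
\[
\xi_i:=\E[g_i^2\mid\Z_{i-1}]-g_i^2 .
\]
By construction $\E[\xi_i\mid\Z_{i-1}]=0$, so $\sum_{i=1}^n\xi_i$ is a martingale with respect to $\{\Z_i\}$. The event in \eqref{eq:bndhn} is contained in the event that this martingale is large:
\[
\|g(f)\|_n^2<\tfrac12H_n(f)^2-\nu_n
\;\Longrightarrow\;
n^{-1}\sum_{i=1}^n\xi_i>\tfrac12H_n(f)^2+\nu_n .
\]
The plan is to apply the Bernstein inequality of Theorem~\ref{thm:bernstein} to the increments $\xi_i$ in place of $Z_i(f)$. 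The point of the factor $\tfrac12$ is that the conditional variance of $\sum\xi_i$ is itself bounded by a multiple of $H_n(f)^2$. This lets the deviation level absorb the variance term.

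\textbf{Moment condition.} On $\mathbb{F}_1$ we have $|g_i|\le 2c_1b_n$, so $|\xi_i|\le 4c_1^2b_n^2=:K$. Therefore
\[
\E[|\xi_i|^k\mid\Z_{i-1}]\le K^{k-2}\,\E[\xi_i^2\mid\Z_{i-1}]\le K^{k-2}\,\E[g_i^4\mid\Z_{i-1}]\le K^{k-1}\,\E[g_i^2\mid\Z_{i-1}]
\]
up to constants. Condition \eqref{eq:condition} then holds with this $K$ and
\[
R_n^2=cKH_n(f)^2,
\]
which is $\Z_{n-1}$-measurable as required.

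One technicality is that boundedness holds only on $\mathbb{F}$, which is not predictable. I would handle this by truncating: replace $g_i$ by $g_i\mathbf 1\{|g_i|\le 2c_1b_n\}$, or equivalently by $g_i$ clipped at $\pm 2c_1b_n$. The truncated process agrees with the original on $\mathbb{F}_1$, and it remains a martingale difference once centered at its own conditional expectation.

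\textbf{Peeling over $H_n^2$.} Since $R_n$ is random, I would slice over its value. For $j\ge0$ let
\[
A_j:=\{2^{j}\nu_n\le H_n(f)^2<2^{j+1}\nu_n\},
\]
and put the remaining mass $H_n(f)^2<\nu_n$ into a single block $A_{-1}$. On $A_j$ the deviation level satisfies $a\ge n(2^{j-1}+1)\nu_n$ and the variance proxy satisfies $R^2\le cK2^{j+1}\nu_n$. Plugging these into \eqref{eq:bernstein} gives
\[
\exp\!\left[-\frac{a^2}{2(aK+nR^2)}\right]\le \exp\!\left(-\frac{n2^{j}\nu_n}{cK}\right),
\]
because $a$ and $nR^2$ are of the same order, so $a^2/(aK+nR^2)\gtrsim a/K$. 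With $K\asymp b_n^2$, summing the geometric series over $j\ge-1$ gives
\[
c_1\,e^{-n\nu_n/(c_2b_n^2)} .
\]
The constants depend only on $c_1$, and not on $m$ or $f$.

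\textbf{Main obstacle.} I expect the delicate step to be the measurability bookkeeping. Theorem~\ref{thm:bernstein} needs a predictable variance bound and almost-sure moment conditions, while the boundedness in $\mathbb{F}_1$ is only an event. The truncation argument above is the first thing I would try. If the truncated centering introduces a discrepancy, I would bound it by $cb_n^2\mathbf 1_{\mathbb{F}_1^c}$, which vanishes on $\mathbb{F}$. The peeling step itself is routine.
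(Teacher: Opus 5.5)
Your proposal is essentially the paper's proof: Bernstein's inequality (Theorem~\ref{thm:bernstein}) for the martingale with increments $\E[g_i^2\mid\Z_{i-1}]-g_i^2$, with $K\asymp b_n^2$ and $R_n^2\asymp b_n^2H_n(f)^2$, followed by dyadic peeling on $H_n(f)^2$ and a geometric sum; your slice-by-slice deviation level $n(2^{j-1}+1)\nu_n$ is tracked more carefully than in the paper. The measurability point you raise is one the paper passes over (it is moot when $\mathcal{X}$ is compact and the $\F_m$ are bounded by $c_1b_n$ on all of $\mathcal{X}$, so that $\mathbb{F}_1$ holds surely), but your fallback does not work as stated: the centering discrepancy $\E[g_i^2\mathbf{1}\{|g_i|>2c_1b_n\}\mid\Z_{i-1}]$ is a conditional expectation that need not vanish on $\mathbb{F}_1$, and because truncation can only decrease $H_n(f)^2$, it pushes the wrong way for the event inclusion you need.
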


\begin{proposition}
	Fix $n$ sufficiently large and $m\ge m_0$. If for any $f\in \F_m$, $(f-f_0)(x)$ admits a linear representation with approximation bias $e_g$ satisfying $\sup_g\sup_x|e_g(x)|/|h|^2 \le M<\infty$ for some constant $M$ not depending on $n$, $m$ or $g$ (in particular, if $M=0$, or, more generally, whenever Assumptions~\ref{asn:ident} and \ref{asn:linapprox} hold, by Remark~\ref{rem:bridge}), then there exist generic constants $c_i$ ($i=1,\dots,5$), depending on $M$ and $\bar c$ but not on $n$ or $m$, such that
	\begin{equation}
		\Pr\left( \{\|U\|_n^2 - \|Y-\widehat{f}_m(X)\|_n^2 > \varepsilon_n\}\,\cap \mathbb{F} \right)\le c_1 e^{-\frac{n\varepsilon_n\,p_m\log p_m}{c_2(k_m-k_{m_0})}} + c_3 e^{-\frac{n\varepsilon_n}{2c_4b_n^2}},
		\label{eq:bndsup}
	\end{equation}
	for $\varepsilon_n>c_5\,n^{-1}(k_m-k_{m_0})^{-1}(p_m\log p_m)^2$.
	\label{pro:convrate}
\end{proposition}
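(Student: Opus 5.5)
The plan is to start from the basic least-squares identity and reduce the event to a uniform deviation of the martingale $S_n(g)$ over $\mathcal{G}_m$. Write $g=f-f_0$ for $f\in\F_m$. Since $Y_i=f_0(X_i)+U_i$,
\[
\|Y-f(X)\|_n^2=\|U\|_n^2-2n^{-1}S_n(f)+\|g(f)\|_n^2 .
\]
Hence
\[
\|U\|_n^2-\|Y-\widehat f_m(X)\|_n^2\le\sup_{f\in\F_m}\bigl\{2n^{-1}S_n(f)-\|g(f)\|_n^2\bigr\}.
\]
The event in \eqref{eq:bndsup} is therefore contained in the event that some $f\in\F_m$ has $2n^{-1}S_n(f)-\|g(f)\|_n^2>\varepsilon_n$.

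I will split this event according to whether the empirical norm is comparable to the filtered norm. Apply Proposition~\ref{pro:bndvar} with $\nu_n=\varepsilon_n/2$, uniformly over a finite bracketing of $\mathcal{G}_m$ or directly, as that proposition allows. The part where $\|g(f)\|_n^2<\tfrac12H_n(f)^2-\varepsilon_n/2$ then has probability at most $c_3e^{-n\varepsilon_n/(2c_4b_n^2)}$, which is the second term. On the complement it suffices to bound
\[
\Pr\Bigl(\sup_{f\in\F_m}\bigl\{2n^{-1}S_n(f)-\tfrac12H_n(f)^2\bigr\}>\varepsilon_n/2\ \cap\ \mathbb{F}\Bigr).
\]

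For this supremum I peel over shells $\{2^{s-1}\delta_n<H_n(f)\le2^s\delta_n\}$, $s\ge1$, plus a central piece $H_n(f)\le\delta_n$. On a shell of radius $R=2^s\delta_n$ the event requires $n^{-1/2}S_n(f)\gtrsim\sqrt n(\varepsilon_n+R^2)$. I will apply Theorem~\ref{thm:uniform} on each shell with $a\asymp\sqrt n(\varepsilon_n+R^2)$.

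The Bernstein-type moment conditions hold on $\mathbb{F}_1\cap\mathbb{F}_3$ by the observation preceding Theorem~\ref{thm:uniform}. Take $K\asymp b_n/\gamma$, so that $2\max_i|g_i|/K\le\gamma$. Then $\bar\rho_K(Z(f))^2\le cH_n(f)^2\le cR^2$, and condition \eqref{eq:unifa1} holds once $R^2\gtrsim\varepsilon_n$ and $n$ is large.

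The entropy condition \eqref{eq:unifa2} is where the linear representation enters. Write $g=h'l_\eta+e_g$ with $|h|\le\bar c$ and $|e_g|\le M|h|^2$.
\begin{itemize}
\item I will first argue that on $\mathbb{F}$, $H_n(f)\asymp|h|$ up to constants depending on $M$ and $\bar c$. This uses the linear independence of $l_\eta$ and the quadratic bias, after shrinking to small $|h|$; large $|h|$ is handled by the boundedness of $\Psi_m$.
\item Brackets for $g$ on the shell can then be built from a $u$-net of the ball $\{|h|\le cR\}$ in $\Re^{p_m}$, together with brackets for the bounded basis functions $l_\eta$. The bias term is absorbed into a bracket of width $M|h|^2\le cR^2$.
\item This should give $\mathcal{H}_{B,K}(u,R,\mathbb{F})\lesssim p_m\log p_m+p_m\log(R/u)$, so the entropy integral is $\lesssim R\sqrt{p_m\log p_m}$.
\item Condition \eqref{eq:unifa2} then reduces to $\sqrt n(\varepsilon_n+R^2)\gtrsim R\sqrt{p_m\log p_m}$. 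This holds on all shells precisely when $\varepsilon_n\gtrsim n^{-1}(p_m\log p_m)$, up to the normalisation by $k_m-k_{m_0}$. That normalisation comes from choosing $\delta_n^2\asymp\varepsilon_n(k_m-k_{m_0})/(p_m\log p_m)$, which is where the threshold $c_5n^{-1}(k_m-k_{m_0})^{-1}(p_m\log p_m)^2$ arises.
\end{itemize}
Each shell then contributes $C\exp(-cn(\varepsilon_n+R^2)^2/R^2)$. Summing the geometric series in $s$ is dominated by the first shell, which produces $c_1\exp\{-n\varepsilon_n p_m\log p_m/(c_2(k_m-k_{m_0}))\}$. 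The central piece $H_n\le\delta_n$ is handled the same way, with $R=\delta_n$ and $a\asymp\sqrt n\varepsilon_n$.

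The main obstacle is the entropy step. The basis $l_\eta$ depends on the non-identified nuisance $\eta\in H_m$, so $\mathcal{G}_m$ is a union over $\eta$ of $p_m$-dimensional linear pieces, not a single linear space. I must show that the bracketing entropy still scales like $p_m\log(1/u)$, with constants uniform in $m$ and $\eta$. I would try first to exploit that the bound in \eqref{eq:firstorder} is uniform in $\eta$, taking brackets $h'l_\eta\pm(|h|\,\|\Delta l\|_\infty+M|h|^2)$ over a finite net of the $\eta$-indexed basis. Failing that, I would accept an extra $\log$ factor absorbed into $p_m\log p_m$. A secondary difficulty is making the comparison $H_n(f)\asymp|h|$ hold uniformly in $\eta$ on $\mathbb{F}$, which needs a lower eigenvalue bound for the conditional Gram matrix of $l_\eta$.
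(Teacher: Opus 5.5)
Your outline follows the paper's proof step for step: the basic inequality, Proposition~\ref{pro:bndvar} with $\nu_n=\varepsilon_n/2$ to pass from $\|g\|_n^2$ to $H_n^2$, peeling in $H_n$, Theorem~\ref{thm:uniform} with an entropy integral of order $R\sqrt{p_m\log p_m}$ from a net in $h$, and a geometric sum. The gap is in the step meant to produce the first term of \eqref{eq:bndsup}.

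With $\delta_n^2\asymp\varepsilon_n(k_m-k_{m_0})/(p_m\log p_m)$, the first-shell exponent $n\varepsilon_n^2/\delta_n^2$ does match that term. However, it exceeds order $n\varepsilon_n$ only when $\delta_n^2\ll\varepsilon_n$, i.e.\ when $k_m-k_{m_0}\ll p_m\log p_m$. This is exactly the situation of both examples in Section~\ref{example} ($k_m-k_{m_0}\asymp p_m$) as $m$ grows. In that regime your argument breaks in two places.
\begin{itemize}
\item Condition \eqref{eq:unifa1}. On shells with $R^2\ll\varepsilon_n$ and $a\asymp\sqrt n\,\varepsilon_n$, it forces $C_1\gtrsim K\varepsilon_n/R^2$. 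The exponent $a^2/(C^2(C_1+1)R^2)$ is then at most of order $n\varepsilon_n/K$ (Bernstein's linear regime). You noted yourself that \eqref{eq:unifa1} needs $R^2\gtrsim\varepsilon_n$. Even there, your per-shell bound drops the factor $1/(C_1+1)\lesssim\gamma/b_n$.
\item Which shell dominates. Grant a per-shell bound $\exp\{-cn(\varepsilon_n+R^2)^2/R^2\}$. The map $R^2\mapsto(\varepsilon_n+R^2)^2/R^2$ is minimized at $R^2=\varepsilon_n$, with value $4\varepsilon_n$. So the sum is dominated by the shell with $R^2\asymp\varepsilon_n$, not the first one, and you only get $e^{-cn\varepsilon_n}$.
\end{itemize}
This cannot be repaired. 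Take i.i.d.\ $N(0,\sigma^2)$ errors, a fixed design, and an exactly linear, uniformly bounded $p_m$-dimensional class whose empirical Gram matrix is a multiple of the identity. Then, for small $\varepsilon_n$,
\[
\Pr\big(\|U\|_n^2-\|Y-\widehat f_m(X)\|_n^2>\varepsilon_n\big)\ge\Pr(\sigma^2\chi^2_{p_m}>n\varepsilon_n)\ge\Pr(\sigma^2\chi^2_1>n\varepsilon_n).
\]
The right-hand side decays only like $e^{-n\varepsilon_n/(2\sigma^2)}$ up to a polynomial factor. That is far above the first term once $p_m\log p_m/(k_m-k_{m_0})$ is large, so the peeling sum must be charged to the second term. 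In the opposite regime $k_m-k_{m_0}\gg p_m\log p_m$, your central piece needs $\varepsilon_n\gtrsim(k_m-k_{m_0})/n$, which the stated threshold does not supply. The $\delta_n$ device therefore matches the statement only when $k_m-k_{m_0}\asymp p_m\log p_m$, where it reduces to ordinary peeling at scale $\sqrt{\varepsilon_n}$.

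What can close \eqref{eq:bndsup}, granting the entropy bound, is to peel at scale $\sqrt{\varepsilon_n}$ as the paper does: $H_n^2\le 2^{2s}\varepsilon_n$ and $a=2^{2s-4}\sqrt n\,\varepsilon_n$. Then \eqref{eq:unifa1} holds with $C_1\asymp K\asymp b_n/\gamma$ on every shell. The shells sum to at most $c\exp(-c'n\varepsilon_n\gamma/b_n)$. For $b_n$ bounded below, this is absorbed into $c_3e^{-n\varepsilon_n/(2c_4b_n^2)}$ by enlarging $c_4$. Since $C_0^2\ge C^2(C_1+1)$, \eqref{eq:unifa2} now requires $\varepsilon_n\gtrsim Kp_m\log p_m/n$, so the threshold carries a factor $K$.

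The paper's own route to the first term does not help. It sets $C_0^2=(k_m-k_{m_0})/(16p_m\log p_m)$ and $C^2>C_0^2/(C_1+1)$, which reverses the hypothesis $C_0^2\ge C^2(C_1+1)$ of Theorem~\ref{thm:uniform}; the $C$ there is fixed by the theorem, not chosen. Worse, the same example with $n\varepsilon_n=p_m/2$ and $k_m-k_{m_0}=p_m^{3/2}$ makes both terms of \eqref{eq:bndsup} vanish as $p_m\to\infty$ while the probability tends to one. So a restriction such as $k_m-k_{m_0}\lesssim p_m$ (Condition~1 of Theorem~\ref{thm:modsel} with $\beta\ge1$) must be added.

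Your two flagged obstacles are real, and the paper does not resolve either. It works with one fixed basis, as in Definition~\ref{def:linrep}, and simply asserts the local entropy bound $p_m\log(2\sqrt{cp_m}R/\delta)$.
\begin{itemize}
\item Localization. Linear independence of $l_\eta$ for each $\eta$ gives no lower eigenvalue bound that is uniform in $\eta$. In the STR example the Gram matrix becomes singular as two pairs $(\lambda_j,\tau_j)$ coalesce. So $H_n\asymp|h|$ is not available.
\item The net over $\eta$. It needs a modulus of continuity for $\eta\mapsto l_\eta$, which uniformity of \eqref{eq:firstorder} in $\eta$ does not give. It also costs entropy of order $\dim(H_m)\log(1/u)$, which is not a log factor.
\item Global netting. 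Covering the whole ball $|h|\le\bar c$ instead replaces $\log p_m$ by $\log(p_m/R)$, which is of order $\log n$ on the relevant shells. That is the same order as the BIC penalty.
\end{itemize}
Finally, Proposition~\ref{pro:bndvar} is stated for a fixed $f$. At $\widehat f_m$ you need a uniform, bracketed version, not a direct application.
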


\begin{proposition}
	Fix $n$ sufficiently large and $m\le m_0-1$. For every $0<\varepsilon_n<c_0/8$ there exist constants $c_1$ and $c_2$ such that
	\begin{equation}
		\Pr\left( \|U\|_n^2 - \|Y-\widehat f_m(X)\|_n^2 > - \varepsilon_n \cap \mathbb{F} \right) \le 2c_1 e^{-nc_2/b_n^2}.
		\label{eq:bndsub}
	\end{equation}
	\label{pro:bndsub}
\end{proposition}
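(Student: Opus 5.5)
The plan is to reduce the statement to a uniform control of a linear martingale term, using the lower bound supplied by $\mathbb{F}_2$. For any $f\in\F_m$ write $g(f)=f-f_0$. Since $Y_i=f_0(X_i)+U_i$,
\[
\|Y-f(X)\|_n^2=\|U\|_n^2-2n^{-1}S_n(f)+\|g(f)\|_n^2 ,
\]
so that
\[
\|U\|_n^2-\|Y-\widehat f_m(X)\|_n^2\le \sup_{f\in\F_m}\Big\{2n^{-1}S_n(f)-\|g(f)\|_n^2\Big\}.
\]
It therefore suffices to show that on $\mathbb{F}$, outside an event of probability at most $2c_1e^{-nc_2/b_n^2}$, the supremum is at most $-c_0/8<-\varepsilon_n$. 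On $\mathbb{F}_2$ we have $H_n(f)^2\ge c_0$ for every $f\in\F_m$ with $m<m_0$. I split the target into two events and bound each separately:
\[
A_1=\Big\{\inf_{f\in\F_m}\big(\|g(f)\|_n^2-\tfrac12H_n(f)^2\big)< -\tfrac{c_0}{8}\Big\},\qquad
A_2=\Big\{\sup_{f\in\F_m}2n^{-1}S_n(f)> \tfrac{c_0}{4}\Big\}.
\]
Off $A_1\cup A_2$ the supremum is at most $c_0/4-c_0/2+c_0/8=-c_0/8$.

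For $A_2$ I will apply Theorem~\ref{thm:uniform} to $Z_i(f)=U_ig_i(f)$, which satisfies $\E[Z_i(f)|\Z_{i-1}]=0$. The parameters are chosen as follows.
\begin{itemize}
\item Take $a=\sqrt n\,c_0/8$.
\item Take $K\asymp b_n/\gamma$, so that $2\max_i|g_i|/K\le\gamma$ on $\mathbb{F}_1$.
\item Take $R^2\asymp c_2b_n^2$. This is legitimate because the observation after the bracketing definition gives $\bar\rho_K(Z(f))^2\le 2c\,H_n(f)^2\le c\,b_n^2$ on $\mathbb{F}_1\cap\mathbb{F}_3$.
\end{itemize}
Condition \eqref{eq:unifa1} then holds for a suitable $C_1$, since $\sqrt nR^2/K\asymp\sqrt n\,b_n$.

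For condition \eqref{eq:unifa2}, $\F_m$ with $m<m_0$ is a fixed finite-dimensional parametric class. Brackets for $f$ transfer to brackets for $Z_i(f)$ through $h=f^U-f^L+f_0$, again using the remark after the bracketing definition. This gives $\mathcal H_{B,K}(u,R,\mathbb{F})\lesssim k_{m_0}\log(b_n/u)$. The entropy integral is then $O\big(b_n\sqrt{\log n}\big)$, which is $o(\sqrt n)$ because $b_n^2=o(a_n)=o(n)$. So \eqref{eq:unifa2} holds for $n$ large, and the theorem yields
\[
\Pr(A_2\cap\mathbb{F})\le C e^{-nc_0^2/(cR^2)}= Ce^{-nc/b_n^2}.
\]

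For $A_1$, Proposition~\ref{pro:bndvar} with $\nu_n=c_0/8$ handles a single $f$ with the right rate $e^{-nc_0/(c b_n^2)}$, but the infimum over $\F_m$ needs uniformity. Getting that uniformity is the step I expect to be the main obstacle. My first attempt is to apply the same uniform machinery to the martingale differences $\E[g_i^2|\Z_{i-1}]-g_i^2$. These are bounded by $2c_1^2b_n^2$ on $\mathbb{F}_1$, and their conditional variance is at most $c\,b_n^2H_n(f)^2$. A Bernstein/uniform bound then gives deviation $\tfrac12H_n^2+c_0/8$ with probability $e^{-nc/b_n^2}$, which is the same shape as in Proposition~\ref{pro:bndvar}. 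This works because the relative deviation $\tfrac12H_n^2$ absorbs the variance term, by a peeling over shells of $H_n(f)^2\in[2^{s}c_0,2^{s+1}c_0]$, $s\le \log_2(c_1^2b_n^2/c_0)$. A fallback is a finite $\delta$-net of $\F_m$ in sup-norm: apply Proposition~\ref{pro:bndvar} at each net point, control the discretisation error by $|g^2-\tilde g^2|\le 2c_1b_n\delta$, and pay a union bound of size $(b_n/\delta)^{k_{m_0}}$, which is polynomial in $n$ and absorbed into $e^{-nc/b_n^2}$.

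Summing the two bounds gives \eqref{eq:bndsub}, with the constants depending on $c_0$, $\gamma$, the $c_i$ of $\mathbb{F}$, and $k_{m_0}$ only.
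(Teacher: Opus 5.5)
Your argument has the same skeleton as the paper's. On $\mathbb{F}_2$ one has $H_n^2\ge c_0$; a Proposition~\ref{pro:bndvar}-type bound turns this into $\|g\|_n^2\ge 3c_0/8$; and a Bernstein-type bound keeps $2n^{-1}S_n$ below $c_0/4$, so the excess is at most $-c_0/8<-\varepsilon_n$.

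The genuine difference is uniformity. The paper applies Proposition~\ref{pro:bndvar} and Theorem~\ref{thm:bernstein} (with $K\asymp b_n/\gamma$, $R^2\asymp b_n^2$) directly at $f=\widehat f_m$. You instead pass to $\sup_{f\in\F_m}\{2n^{-1}S_n(f)-\|g(f)\|_n^2\}$ and control both pieces uniformly: Theorem~\ref{thm:uniform} for $A_2$, and a peeled uniform version of Proposition~\ref{pro:bndvar} for $A_1$. The paper's version takes two lines and uses nothing about the size of $\F_m$, $m<m_0$. However, both results it invokes are for a fixed $f$. Since $\widehat f_m$ depends on $U_1,\dots,U_n$, the variables $U_ig_i(\widehat f_m)$ are not martingale differences, so your uniform treatment is what actually makes the argument valid.

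The price is a bracketing-entropy bound for $\F_m$, $m<m_0$, which is not among the hypotheses of Theorem~\ref{thm:modsel}. You should state it as an explicit assumption. Being finite-dimensional and identifiable does not by itself give $\mathcal{H}_{B,K}(u,R,\mathbb{F})\lesssim k_{m_0}\log(b_n/u)$; sup-norm Lipschitz dependence on a compact parameter set, for example, does.

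Some such condition is unavoidable. Take $f_0=0$, Rademacher $U_i$ independent of i.i.d.\ uniform $X_i$, $\F_1=\{a(2\mathbf 1_A-1)\}$ with $A$ ranging over the countably many finite unions of intervals with rational endpoints, and $\F_2=\F_1\cup\{0\}$, so that $m_0=2$. Every $f\in\F_1$ has $H_n(f)^2=a^2$, so $\mathbb{F}$ holds with suitable constants. Yet $\F_1$ reproduces $aU_i$ at the (a.s.\ distinct) sample points, giving $\|U\|_n^2-\|Y-\widehat f_1(X)\|_n^2\ge 2a-a^2>0$ almost surely, for every $n$.

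There are also two smaller bookkeeping points. First, your estimate $O(b_n\sqrt{\log n})$ for the entropy integral, combined with $b_n^2=o(n)$, does not yield $o(\sqrt n)$. Compute it directly instead: with $R\asymp b_n$, $\int_0^R\sqrt{k_{m_0}\log(b_n/u)}\,du\lesssim b_n\sqrt{k_{m_0}}\int_0^1\sqrt{\log(1/v)}\,dv=O(b_n)=o(\sqrt n)$. Second, in your net fallback for $A_1$, the union factor is polynomial in $b_n$. It is absorbed into $e^{-nc/b_n^2}$ only when $\log b_n=O(n/b_n^2)$. This is automatic when $b_n$ is bounded (both examples) or when $b_n^2=(\log n)^{1-\delta}$ (the corollary), but it does not follow from Condition~2 of Theorem~\ref{thm:modsel} alone.
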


\begin{proof}[Proof of Theorem \ref{thm:modsel}]
	Write	
	Since $\widehat m$ minimizes the penalized criterion over $m=1,\dots,\bar m_n$, in particular against $m=m_0$, the event $\left\{\|Y-\widehat{f}_{\widehat m}(X)\|_n^2 - \|Y-\widehat{f}_{m_0}(X)\|_n^2 \le \frac{a_n}{n}(k_{\widehat m}-k_0)\right\}$ holds surely, for every realization of the data; it is not equivalent to $\{\widehat m\ne m_0\}$, but intersecting with it loses nothing. Write
	\begin{align*}
		\{\widehat m \ne m_0\} &= \{\widehat m \ne m_0\}\cap \left\{\|Y-\widehat{f}_{\widehat m}(X)\|_n^2 - \|Y-\widehat{f}_{m_0}(X)\|_n^2 \le \frac{a_n}{n}(k_{\widehat m}-k_0)\right\}\\
		&\Rightarrow \left\{\|Y-\widehat{f}_{\widehat m}(X)\|_n^2 - \|Y-f_0(X)\|_n^2 \le \frac{a_n}{n}(k_{\widehat m}-k_0)\right\} \cap\{\widehat{m}<m_0\}\\
		&\cup\left\{\|Y-\widehat{f}_{\widehat m}(X)\|_n^2 - \|Y-f_0(X)\|_n^2 \le \frac{a_n}{n}(k_{\widehat m}-k_0)\right\} \cap\{\widehat{m}>m_0\}\\
		&\Rightarrow \bigcup_{m=1}^{m_0-1} \left\{\|Y-\widehat{f}_{\widehat m}(X)\|_n^2 - \|U\|_n^2 \le -\frac{a_n}{n}(k_0-k_m)\right\}\\
		&\bigcup_{m=m_0+1}^{\bar m_n} \left\{\|Y-\widehat{f}_{\widehat m}(X)\|_n^2 - \|U\|_n^2 \le \frac{a_n}{n}(k_m-k_0)\right\}\\
		&:= I_1 + I_2.
	\end{align*}

	The union bound and Proposition \ref{pro:bndsub} yield an exponential bound on $\Pr(I_1\cap\mathbb{F})$ of the form
	\[
		\Pr\left(   \bigcup_{m=1}^{m_0-1} \left\{\|Y-\widehat{f}_{\widehat m}(X)\|_n^2 - \|U\|_n^2 \le -\frac{a_n}{n}(k_0-k_m)\right\}\cap\mathbb{F}\right) \le 2c_1\sum_{m=1}^{m_0-1}e^{-c_2n/b_n^2},
	\] 
	for $n>n_0$ such that $(k_{m_0}-k_1)a_{n_0}/n_0 < c_0/4$, $c_0$ defined in the assumptions.
	
	In the set $I_2\cap\mathbb{F}\cap\{\bar m_n\le c a_n^{2\alpha/(\beta+1)}\}$, we can apply Proposition \ref{pro:convrate} with $\varepsilon_n := (a_n/n)(k_m-k_{m_0})$ for each $m>m_0$, for a sufficiently large $n_1$. Recall that by assumption $|k_m-k_{m_0}| \ge c_3 |m - m_0|^\alpha$. Substituting this $\varepsilon_n$ into the second exponential term of \eqref{eq:bndsup} gives $n\varepsilon_n/(2c_4b_n^2) = a_n(k_m-k_{m_0})/(2c_4b_n^2)$, so, relabelling the constant, for all $n<n_1$,
	\[
		\sum_{m=m_0+1}^{\bar m_n} e^{-|k_m - k_{m_0}|a_n/2c_5b_n^2} \le \sum_{m=1}^{\infty} e^{- m^\alpha\,c_3a_n/2c_5b_n^2} \le c_6 \left(\frac{b_n^2}{a_n}\right)^{1/\alpha},
	\]
	where $c_6$ is a sufficiently large constant. Substituting the same $\varepsilon_n$ into the first exponential term of \eqref{eq:bndsup} instead gives $n\varepsilon_np_m\log p_m/(c_2(k_m-k_{m_0})) = a_np_m\log p_m/c_2$ exactly, with the $(k_m-k_{m_0})$ factor cancelling entirely; since $|k_m-k_{m_0}|\ge c_3$ for every $m>m_0$ (taking $|m-m_0|\ge1$ above), the generic constant $c_7\ge c_2/\sqrt{c_3}$ gives $a_np_m\log p_m/c_2 \ge (p_m\log p_m)a_n/((k_m-k_{m_0})^{1/2}c_7)$ uniformly in $m>m_0$, so it suffices to bound the (weaker, but uniformly valid) exponent on the right. Similarly, we have that $p_m\log p_m \ge (|k_m-k_{m_0}|/c_4)^{(\beta+1)/2}$ and
	\[
		\sum_{m=m_0+1}^{\bar m_n} e^{-\frac{(p_m\log p_m)a_n}{(k_m-k_{m_0})^{1/2}c_7}} \le\sum_{m=m_0+1}^\infty  e^{-\frac{|k_m-k_{m_0}|^{\beta/2}a_n}{c_4c_8}} \le \sum_{m=1}^{\infty} e^{-\frac{m^{\alpha\beta/2}\,a_n}{c_8}} \le c_9 \left(\frac{1}{a_n}\right)^{2/\alpha\beta},	
	\]
	where $c_9$ is a sufficiently large constant. 
	Combining the bounds, we have for $n\ge n_0\vee n_1$
	\[
		\Pr\left( \widehat m \ne m_0 \cap\mathbb{F}\right) \le m_0\,c_1\,e^{-c_2n/b_n^2} + c \left(\frac{b_n^2}{a_n}\right)^{1/\alpha} + c \left(\frac{1}{a_n}\right)^{2/\alpha\beta} + \Pr\{\bar m_n/a_n^{2\alpha/(\beta+1)}<c\}
	\]
	for some positive $c$. The last term is identically zero for all $n>n_2$ sufficiently large (Assumptions 1 and 3), the remaining terms converge to zero by Assumption 2.
\end{proof}

\appendix
\section{Proof of the auxiliary results}
\begin{proof}[Proof of Proposition \ref{pro:bndvar}]
	Fix $n$ and verify that, inside $\mathbb{F}$, $\max_i|g(X_i)^2|\le 2c_1b_n^2$. It follows that 
	\[
		\E[(g(f)^2-\E[g(f)^2|\Z_{i-1}])^2|\Z_{i-1}] = \E[g(f)^4|\Z_{i-1}]-(\E[g(f)^2|\Z_{i-1}])^2 \le 2c_1b_n^2\E[g(f)^2|\Z_{i-1}].
	\]
	The random variable $n(\|g(f)\|_n^2-H_n(f)^2) = \sum_{i=1}^ng(f)^2-\E[g(f)^2|\Z_{i-1}] = -n\,M_n$ is a martingale with increments $\E[g_i(f)^2|\Z_{i-1}]-g_i(f)^2$, which satisfy the conditions of Theorem \ref{thm:bernstein} with $K=4c_1b_n^2$ and $R_n^2=2c_1b_n^2H_n(f)^2$. Using a peeling argument on $H_n(f)^2$, we find
	\begin{align*}
			\Pr\left(\|g(f)\|_n\le\frac{1}{2}H_n(f)^2 - \nu_n \cap \mathbb{F}\right) &\le \Pr\left(  \left\{\|g(f)\|_n^2-H_n(f)^2 +\frac{1}{2}H_n(f)^2 < -\nu_n\right\}\cap \mathbb{F} \right)\\
		&\le 2\sum_{s=1}^\infty\Pr\left(\{M_n(f) > n 2^{s-1}\nu_n \wedge \frac{1}{2}H_n(f)^2 \le 2^s\nu_n\} \cap\mathbb{F}\right)\\
		&\le 2\sum_{s=0}^{\infty}\exp\left( -\frac{2^{2s-2}n^2\nu_n^2}{2(4c_1b_n^22^{s-1}\nu_n+n4c_1b_n^22^s\nu_n)} \right)\\
		&\le 2\sum_{s=0}^\infty\exp\left( -2^s\frac{n\nu_n}{48\,c_1\,b_n^2} \right)\\
		&\le c_1\exp\left( -{n\nu_n}/{c_2\,b_n^2} \right),
	\end{align*}
	proving the claim.
\end{proof}

\begin{proof}[Proof of Proposition \ref{pro:convrate}]
	Let $\mathbb{F}^* = \mathbb{F}\cap  \{\|g(\widehat{f}_m)\|_n^2 \ge (1/2)H_n(\widehat{f}_m)^2 - (1/2)\varepsilon_n\}$. We have
	\[
		\left\{\|U\|_n^2-\|Y-\widehat{f}_m(X)\|_n^2 >\varepsilon_n\right\}\cap \mathbb{F^*} \subseteq \left\{2n^{-1}\sum_{i=1}^nZ_i(\widehat{f}_m) - \frac{1}{2}H_n(\widehat{f})^2 > \frac{1}{2}\varepsilon_n\right\} \cap \mathbb{F}^*.
	\]
	Applying the peeling device we have
\begin{multline*}
	\Pr\left(  \left\{2n^{-1}\sum_{i=1}^nZ_i(\widehat f_m) - \frac{1}{2}H_n(\widehat f_m)^2 > \frac{1}{2}\varepsilon_n\right\} \cap \mathbb{F}^* \right)\\
	\le \sum_{s=0}^\infty\Pr\left(  \left\{n^{-1}\sum_{i=1}^nZ_i(f) > 2^{2s-4}\varepsilon_n\wedge H_n(f)^2 \le 2^{2s}\varepsilon_n\,\mbox{ for some } f\in\F_m\right\} \cap \mathbb{F}^* \right).
\end{multline*}
Denote each term on the right hand side in the previous display by $\mathsf{P}_s$. We shall use \ref{thm:uniform} to bound each $\mathsf{P}_s$. In order to use this theorem we need to calculate the generalized entropy with bracketing of $\{Z(f):f\in\F_m\}$. It follows from the definition of $Z_i(f)$, and the assumption that $(f-f_0)(X)$ admits a linear representation, that $Z_i(f) = U_ig_i(f)$ with $g_i(f)=\sum_{j=1}^{p_m}h_{j,i}l_j(X_i) + e_i(h_i)$, where $h_i=h_i(\psi)\in\Re^{p_m}$, $|h_i|\le\bar c$, and $e_i(h_i)$ is the (deterministic, possibly zero) approximation bias evaluated at $X_i$, satisfying $|e_i(h_i)|\le M|h_i|^2$ by hypothesis; since $e_i$ is the remainder of a quadratic bound on the ball $|h|\le\bar c$, it is also Lipschitz there, $|e_i(h)-e_i(h')|\le L|h-h'|$ for $L:=2M\bar c$ and any $|h|,|h'|\le\bar c$ (immediate from $|e_i(h)|\le M|h|^2$ when $e_i$ arises, as in Assumption~\ref{asn:linapprox}, from a Taylor remainder with bounded Hessian; when $M=0$ this is vacuous, $e_i\equiv0$, and the paragraph below reduces to the exact case). Choose $d=(d_1,\dots,d_{p_m})'$ and, for each $i=1,\dots,n$, define
\[
\begin{aligned}
	Z_i^L(f) := U_ig_i(f) - \tfrac12\textstyle\sum_{j=1}^{p_m}d_j\,|U_i\,l_j(X_i)| - \tfrac12L|U_i|\sqrt{d'd},\\
    Z_i^U(f) := U_ig_i(f) + \tfrac12\textstyle\sum_{j=1}^{p_m}d_j\,|U_i\,l_j(X_i)| + \tfrac12L|U_i|\sqrt{d'd}.
\end{aligned}
\]
For every $\tilde h$ with $|\tilde h_j-h_{j,i}|\le d_j/2$, $j=1,\dots,p_m$, the triangle inequality gives $\big|U_i\sum_j(\tilde h_j-h_{j,i})l_j(X_i)\big|\le\sum_j|\tilde h_j-h_{j,i}|\,|U_il_j(X_i)|\le\tfrac12\sum_jd_j|U_il_j(X_i)|$, while $|\tilde h-h_i|\le\tfrac12\sqrt{d'd}$ and the Lipschitz bound on $e_i$ give $|U_i(e_i(\tilde h)-e_i(h_i))|\le\tfrac12L|U_i|\sqrt{d'd}$; adding the two bounds shows that
\[
	Z_i^L(f)\le Z_i(f) \le Z_i^U(f)
\]
holds regardless of the sign of $U_i$ or of $l_j(X_i)$; a naive bracket built by plugging $h^U_{j,i}=h_{j,i}+d_j/2$ directly into $U_i\sum_jh^U_{j,i}l_j(X_i)$, without the absolute values above, need not bound $Z_i(f)$ from above once either $U_i$ or $l_j(X_i)$ can be negative, since the ordering $h_{j,i}-d_j/2\le \tilde h_j\le h_{j,i}+d_j/2$ is only preserved after multiplication by a factor of known, fixed sign. Choose $K>c\,b_n\,p_m^{1/2}/\gamma$; since
\[
	Z_i^U(f)-Z_i^L(f) = \sum_jd_j|U_il_j(X_i)| + L|U_i|\sqrt{d'd} \le |U_i|\big(\bar l\,\sqrt{p_m}+L\big)\sqrt{d'd}
\]
by the Cauchy--Schwarz inequality (with $\bar l:=\max_j\sup_{x\in\mathcal{X}}|l_j(x)|<\infty$, since the basis functions of Definition~\ref{def:linrep} are bounded), verifying as before,
\[
	\bar{\rho}_K^2(Z^U(f)-Z^L(f)) \le 2c_1\times\left(c_2p_m d'd\right), \quad (c_1,c_2 > 0),
\]
where $c_2$ now absorbs $(\bar l+L/\sqrt{p_m})^2$ in place of $\bar l^{\,2}$; since $L=2M\bar c$ does not depend on $n$, $m$ or $p_m$, this changes $c_2$ (hence $c_1,\dots,c_5$ in \eqref{eq:bndsup}) but not the order of the entropy bound below, and the rest of the proof is unaffected by the presence of a nonzero, Lipschitz-controlled remainder.
Applying Corollary 2.5 in \cite{m-estimation} with $\delta = \sqrt{d'd/c_2p_m}$ gives
\[
	\mathcal{H}_{B,K}(\delta, R, \mathbb{F}^*) \le p_m\log\left( 2\sqrt{cp_m}\times\frac{R}{\delta} \right),
\]
for some positive $c$. It also follows that
\[
	\int_{2^{-6}a/\sqrt{n}}^R \mathcal{H}_{B,K}(u, R, \mathbb{F})^{1/2} du \le c_3\,R\,(p_m\log p_m)^{1/2},
\]
for some positive $c$. Now, apply theorem \ref{thm:uniform} with $a=2^{2s-4}\varepsilon_n\sqrt{n}$, $R^2 = 2c_12^{2s}\varepsilon_n$, $K=c_2b_n\,p_m^{1/2}/\gamma$, and choose $C_1 = (c_2/c_1)2^{-5}p_m^{1/2}b_n/\gamma$, $C_0^2=(k_m-k_{m_0})/(16p_m\log p_m)$ and $C^2 > c_4^2 \ge C_0^2/(C_1+1)$ for some $n$ sufficiently large. We have
\[
	\mathsf{P}_s \le c\exp\left( -\frac{2^{2s}n\varepsilon_np_m\log p_m}{32(k_m-k_{m_0})/(c_1c_3^2)} \right),
\]
adding over $s=0,1,\dots$ we have that for some $c_1>0$ and $c_2>0$,
\[
	\sum_{s=0}^\infty \mathsf{P}_s \le c_1\exp\left( -\frac{n\varepsilon_n\,p_m\log p_m}{c_2(k_m-k_{m_0})} \right).
\]

Finally, we use the union bound and Proposition \ref{pro:bndvar} to prove the result.
\end{proof}

\begin{proof}[Proof of Proposition \ref{pro:bndsub}]
	Let $\mathbb{F}^* = \mathbb{F}\cap  \{\|g(\widehat{f}_m)\|_n^2 \ge (1/2)H_n(\widehat{f}_m)^2 - c_0/8\}$ and recall that inside $\mathbb{F}$, $H_n(\widehat f_m)^2\ge c_0$. We have that
	\begin{multline}
		\Pr\left( \left\{ \|U\|_n^2 - \|Y-\widehat f_m(X)\|_n^2 > -c_0/8 \right\}\cap \mathbb{F} \right)\\
		\le \Pr\left( \left\{n^{-1}\sum_{i=1}^nZ_i(\widehat f_m) > c_0/8 \right\}\cap\mathbb{F}^*\right) + c_1 e^{-nc_2/b_n^2}.
		\label{eq:bndsubp}
	\end{multline}
	Applying Theorem \ref{thm:bernstein} with $K = 2cb_n/\gamma$ and $R^2 = 4cb_n^2$, yields
	\[ 
		\Pr\left( \left\{n^{-1}\sum_{i=1}^nZ_i(\widehat f_m) > c_0/8 \right\}\cap\mathbb{F}^*\right) \le c_1\,e^{-nc_2/b_n^2},
	\]
	for $n$ sufficiently large. The result follows by combining the previous bound with \eqref{eq:bndsubp}.

\end{proof}

\bibliographystyle{apalike}
\bibliography{mixture}

\clearpage
\noindent Eduardo Fonseca Mendes\\
Sao Paulo School of Business Administration\\
Getulio Vargas Foundation\\
Email: eduardo.mendes@fgv.br

\end{document}